\documentclass[12pt,a4paper]{article}

\usepackage{titlesec}
\titleformat*{\subsection}{\normalsize\bfseries}
\usepackage{tocloft}
\usepackage{authblk}
\usepackage{amsmath,amssymb,amscd,amsthm,latexsym,accents,stmaryrd}
\usepackage{mathrsfs}
\usepackage{mathtools}
\usepackage{amsfonts}
\usepackage{bbm}
\usepackage[latin9]{inputenc}
\usepackage[all,cmtip]{xy}
\usepackage{enumerate}
\usepackage{url}
\usepackage{hyperref}
\usepackage{tikz}
\usepackage{tikz-cd}
\usetikzlibrary{matrix,arrows,backgrounds}
\usetikzlibrary{cd}
\usetikzlibrary{decorations.text}
\usetikzlibrary{patterns}
\usetikzlibrary{automata,positioning}
\usepackage[parfill]{parskip}

\makeatletter
\def\thanks#1{\protected@xdef\@thanks{\@thanks
		\protect\footnotetext{#1}}}
\makeatother

\makeatletter
\newcommand{\subjclass}[2][2020]{%
	\let\@oldtitle\@title%
	\gdef\@title{\@oldtitle\footnotetext{#1 \emph{Mathematics subject classification:} #2}}%
}
\newcommand{\keywords}[1]{%
	\let\@@oldtitle\@title%
	\gdef\@title{\@@oldtitle\footnotetext{\emph{Key words and phrases:} #1.}}%
}
\makeatother

\makeatletter
\DeclareFontEncoding{LS2}{}{\@noaccents}
\makeatother
\DeclareFontSubstitution{LS2}{stix}{m}{n}
\DeclareSymbolFont{largesymbolsstix}{LS2}{stixex}{m}{n}
\DeclareMathDelimiter{\lbrbrak}{\mathopen}{largesymbolsstix}{"EE}{largesymbolsstix}{"14}
\DeclareMathDelimiter{\rbrbrak}{\mathclose}{largesymbolsstix}{"EF}{largesymbolsstix}{"15}

\makeatletter
\@namedef{subjclassname@2020}{%
	\textup{2020} Mathematics Subject Classification}
\makeatother

\newcommand{\nocontentsline}[3]{}
\newcommand{\tocless}[2]{\bgroup\let\addcontentsline=\nocontentsline#1{#2}\egroup}

\theoremstyle{definition}
\newtheorem{defin}{Definition}[section]
\newtheorem{rem}[defin]{Remark}

\theoremstyle{plain}
\newtheorem{theor}[defin]{Theorem}
\newtheorem{lem}[defin]{Lemma}
\newtheorem{prop}[defin]{Proposition}

\newtheoremstyle{dotless-thm}
{3pt}
{3pt}
{}
{}
{}
{.}
{.5em}
{}
\theoremstyle{dotless-thm}

\def\Z{{\mathbb{Z}}}

\def\A{{\mathcal{A}}}
\def\B{{\mathcal{B}}}
\def\C{{\mathcal{C}}}
\def\D{{\mathcal{D}}}
\def\G{{\mathcal{G}}}

\def\T{{\mathcal{T}}}

\def\P{{\mathcal{P}}}
\def\Q{{\mathcal{Q}}}
\def\R{{\mathbb{R}}}

\allowdisplaybreaks

\author[1]{Catarina Faustino}
\author[1]{Thomas Kahl}
\author[1]{Rodrigo Lopes}
\setcounter{Maxaffil}{1}

\thanks{This research was supported by FCT (\emph{Funda\c c\~ao para a Ci\^encia e a Tecnologia}, Por\nolinebreak tugal) through the first author's PhD scholarship UI/BD/152071/2021 and projects UIDB/00013/2020 and UIDP/00013/2020.} 

\affil{\small{Centro de Matem\'atica,
		Universidade do Minho, 
		4710-057 Braga,
		Portugal\\
		\texttt{catarina.0109.cf@gmail.com}, 
		\texttt{kahl@math.uminho.pt}
		\texttt{rodrigorok2.ML@gmail.com}
	}
}

\begin{document}

\title{On the topology of concurrent systems}

\date{}

\subjclass{55U10, 68Q85, 05E45}

\keywords{Polyhedron, higher-dimensional automaton, shared-variable system}

\maketitle

\begin{abstract}
	Higher-dimensional automata, i.e., pointed labeled precubical sets, are a powerful combinatorial-topological model for concurrent systems. In this paper, we show that for every (nonempty) connected polyhedron there exists a shared-variable system such that the higher-dimensional automaton modeling the state space of the system has the homotopy type of the polyhedron.
\end{abstract}

\section{Introduction}
\begin{sloppypar}
As amply demonstrated in the literature, concepts and methods from algebraic topology can be profitably employed in concurrency theory, the field of computer science that studies systems of simultaneously executing processes, see, e.g., \cite{Goubault, FajstrupGR, vanGlabbeek, FGHMR}. Several topological models of concurrency have been introduced by various authors, e.g., \cite{FajstrupGR, GrandisBook, Krishnan}. A particularly important combinatorial-topological model of concurrency is given by higher-dimensional automata \cite{vanGlabbeek}, which go back to \cite{Pratt}. It has been shown in \cite{vanGlabbeek} that higher-dimensional automata are more expressive than the principal traditional models of concurrency. 
\end{sloppypar}

A higher-dimensional automaton (HDA) is a pointed precubical set (cubical set without degeneracy maps) with edge labeling such that opposite edges of \(\mbox{2-}\)cubes have the same label. The vertices of an HDA represent the states of a concurrent system, with the base vertex corresponding to the initial state. The labeled edges model the transitions of the system, and two- and higher-dimensional cubes express the independence of transitions: an \(n\)-cube in an HDA indicates that the \(n\) transitions starting from its origin are independent in the sense that they can occur in any order or even simultaneously without any observable difference.  

\begin{sloppypar}
	A standard procedure for constructing an HDA model of a concurrent system is to first construct a transition system and then fill in empty squares and higher-dimensional cubes, see,  e.g., \cite{vanGlabbeek, GaucherCombinatorics, GoubaultMimram, transhda}. To make this more precise, consider the example of Peterson's algorithm, a protocol designed to give two processes fair and mutually exclusive access to a shared resource \cite{Peterson}. Peterson's algorithm is based on three shared variables---namely, a variable \({t}\) whose possible values are the process IDs, say \({0}\) and \({1}\), and two  boolean variables \({b_0}\) and \({b_1}\). Process \({i}\) executes the following protocol with four local states and four transitions: 
	\begin{itemize}
		\item Set \({b_i}\) to \({1}\) to indicate the intention to enter the ``critical section".
		\item Set \({t}\) to \({1-i}\) to give priority to the other process.
		\item Wait until \({b_{1-i} = 0}\) or \({t = i}\), and then enter the critical section. 
		\item Leave the critical section setting \({b_i}\) to \({0}\), and repeat the procedure from the beginning. 
	\end{itemize}
    To start, all variables are set to \( 0\). A global state of such a shared-variable system is a tuple whose components are local states of the processes and values of the variables. The transition system associated with the shared-variable system is a labeled directed graph whose vertices correspond to  the global states that are actually visited during some execution of the system, and whose edges model the transitions between these global states. 
    The transitions starting from a given global state correspond to the actions that are enabled in that state. These actions are specified in the edge labels, indexed by the respective process IDs. The HDA model of the system is then constructed from the transition system as a kind of coskeleton, i.e., by suitably filling in empty cubes of dimensions \(\geq 2\), see Figure \ref{peterHDA} for the case of Peterson's algorithm.
	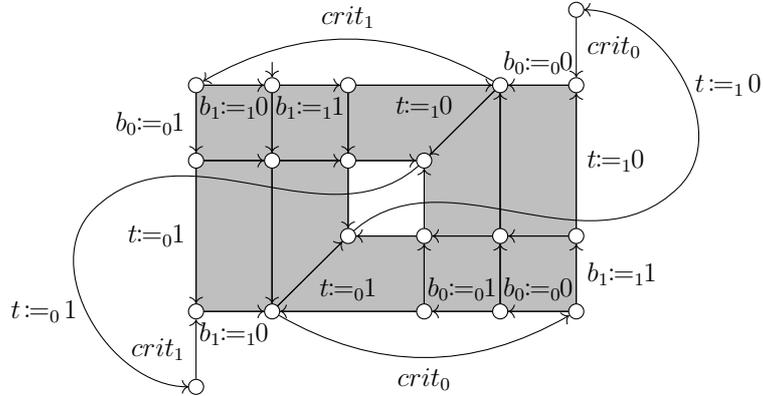
\begin{figure}[t]
		\center
		\begin{tikzpicture}[initial text={},on grid]

			\path[draw, fill=lightgray] (0,0)--(1,0)--(1,-1)--(0,-1)--cycle;    
			
			\path[draw, fill=lightgray] (1,0)--(2,0)--(2,-1)--(1,-1)--cycle;
			
			\path[draw, fill=lightgray] (2,0)--(4,0)--(3,-1)--(2,-1)--cycle;
			
			\path[draw, fill=lightgray] (0,-1)--(1,-1)--(1,-3)--(0,-3)--cycle;
			
			\path[draw, fill=lightgray] (1,-1)--(2,-1)--(2,-2)--(1,-3)--cycle;
			
			\path[draw, fill=lightgray] (3,-1)--(4,0)--(4,-2)--(3,-2)--cycle;
			
			\path[draw, fill=lightgray] (4,0)--(5,0)--(5,-2)--(4,-2)--cycle;
			
			\path[draw, fill=lightgray] (2,-2)--(3,-2)--(3,-3)--(1,-3)--cycle;
			
			\path[draw, fill=lightgray] (3,-2)--(4,-2)--(4,-3)--(3,-3)--cycle;
			
			\path[draw, fill=lightgray] (4,-2)--(5,-2)--(5,-3)--(4,-3)--cycle;
			
			\draw[] (3,-1) to [out=225,in=45] (-1.35,-1.65); 
			
			\draw[->] (-1.35,-1.65) to [out=225,in=180] (-0.1,-4);
			
			\node at (-2,-3) {\scalebox{0.85}{${t\!\coloneqq_0\! 1}$}};
			
			\draw[] (2,-2) to [out=45,in=225] (6.35,-1.35); 
			
			\draw[->] (6.35,-1.35) to [out=45,in=0] (5.1,1);

			\node at (7,0) {\scalebox{0.85}{${t\!\coloneqq_1\! 0}$}};

			\node[state,minimum size=0pt,inner sep =2pt,fill=white] (q_0) at (0,0)  {}; 
			
			\node[state,minimum size=0pt,inner sep =2pt,fill=white, initial,initial where=above,initial distance=0.2cm] (q_1) [right=of q_0,xshift=0cm] {};
			
			\node[state,minimum size=0pt,inner sep =2pt,fill=white] (q_2) [right=of q_1,xshift=0cm] {};
			
			\node[state,minimum size=0pt,inner sep =2pt,fill=white] (q_3) [right=of q_2,xshift=1cm] {};
			
			\node[state,minimum size=0pt,inner sep =2pt,fill=white] (q_4) [right=of q_3,xshift=0cm] {};
			
			\node[state,minimum size=0pt,inner sep =2pt,fill=white] (q_5) [below=of q_0,xshift=0cm] {};  
			
			\node[state,minimum size=0pt,inner sep =2pt,fill=white] (q_6) [right=of q_5,xshift=0cm] {};
			
			\node[state,minimum size=0pt,inner sep =2pt,fill=white] (q_7) [right=of q_6,xshift=0cm] {};
			
			\node[state,minimum size=0pt,inner sep =2pt,fill=white] (q_8) [right=of q_7,xshift=0cm] {};
			
			\node[state,minimum size=0pt,inner sep =2pt,fill=white] (q_9) [below=of q_7,xshift=0cm] {};
			
			\node[state,minimum size=0pt,inner sep =2pt,fill=white] (q_10) [right=of q_9,xshift=0cm] {};
			
			\node[state,minimum size=0pt,inner sep =2pt,fill=white] (q_11) [right=of q_10,xshift=0cm] {};
			
			\node[state,minimum size=0pt,inner sep =2pt,fill=white] (q_12) [right=of q_11,xshift=0cm] {};
			
			\node[state,minimum size=0pt,inner sep =2pt,fill=white] (q_17) [below=of q_12,xshift=0cm] {};
			
			\node[state,minimum size=0pt,inner sep =2pt,fill=white] (q_16) [left=of q_17,xshift=0cm] {};
			
			\node[state,minimum size=0pt,inner sep =2pt,fill=white] (q_15) [left=of q_16,xshift=0cm] {};
			
			\node[state,minimum size=0pt,inner sep =2pt,fill=white] (q_14) [left=of q_15,xshift=-1cm] {};
			
			\node[state,minimum size=0pt,inner sep =2pt,fill=white] (q_13) [left=of q_14,xshift=0cm] {};
			
			\node[state,minimum size=0pt,inner sep =2pt,fill=white] (p_0) [above=of q_4,xshift=0cm] {};
			
			\node[state,minimum size=0pt,inner sep =2pt,fill=white] (p_1) [below=of q_13,xshift=0cm] {};

			\path[->] 
			(p_0) edge[right] node[] {\scalebox{0.85}{${crit_0}$}} (q_4)
			
			(q_0) edge[below] node[] {\scalebox{0.85}{${b_1\!\!\coloneqq_1\!\! 0}$}} (q_1)
			(q_1) edge[below] node[] {\scalebox{0.85}{${b_1\!\!\coloneqq_1\!\! 1}$}} (q_2)
			(q_2) edge[below] node[] {\scalebox{0.85}{${t\!\!\coloneqq_1\!\! 0}$}} (q_3)
			(q_4) edge[above] node[] {\scalebox{0.85}{${b_0\!\!\coloneqq_0\!\! 0}$}} (q_3)
			
			(q_0) edge[left,] node[] {\scalebox{0.85}{${b_0\!\!\coloneqq_0\!\! 1}$}} (q_5)
			(q_1) edge[above] node {} (q_6)
			(q_2) edge[above] node {} (q_7)
			(q_3) edge[above] node {} (q_8)
			
			(q_5) edge[above] node {} (q_6)
			(q_6) edge[above] node {} (q_7)
			(q_7) edge[above] node {} (q_8)
			
			(q_7) edge[above] node {} (q_9)
			(q_10) edge[above] node {} (q_8)
			(q_11) edge[above] node {} (q_3)
			(q_12) edge[right] node[] {\scalebox{0.85}{${t\!\!\coloneqq_1\!\! 0}$}} (q_4)
			
			(q_10) edge[above] node {} (q_9)
			(q_11) edge[above] node {} (q_10)
			(q_12) edge[above] node {} (q_11)
			
			(q_5) edge[left] node[] {\scalebox{0.85}{${t\!\!\coloneqq_0\!\! 1}$}} (q_13)
			(q_6) edge[above] node {} (q_14)
			(q_14) edge[above] node {} (q_9)
			(q_15) edge[above] node {} (q_10)
			(q_16) edge[above] node {} (q_11)
			(q_17) edge[right] node[] {\scalebox{0.85}{${b_1\!\!\coloneqq_1\!\! 1}$}} (q_12)
			
			(q_13) edge[below] node[] {\scalebox{0.85}{${b_1\!\!\coloneqq_1\!\! 0}$}} (q_14)
			(q_15) edge[above] node[] {\scalebox{0.85}{${t\!\!\coloneqq_0\!\! 1}$}} (q_14)
			(q_16) edge[above] node[] {\scalebox{0.85}{${b_0\!\!\coloneqq_0\!\! 1}$}} (q_15)
			(q_17) edge[above] node[] {\scalebox{0.85}{${b_0\!\!\coloneqq_0\!\! 0}$}} (q_16)
			
			(p_1) edge[left] node[] {\scalebox{0.85}{${crit_1}$}} (q_13)
			
			(q_3) edge[above, bend right] node[] {\scalebox{0.85}{${crit_1}$}} (q_0)
			(q_14) edge[below, bend right] node[] {\scalebox{0.85}{${crit_0}$}} (q_17)
			;

		\end{tikzpicture}
		\caption{HDA for Peterson's algorithm. Parallel arrows are supposed to have the same label, and the small arrow indicates the initial state.}\label{peterHDA}
	\end{figure}
\end{sloppypar}

\begin{sloppypar}
    It turns out that the topological analysis of HDAs provides information that is relevant from the point of view of computer science. Indeed, two executions of a concurrent system can be considered equivalent if and only if they can be modeled as directed paths that are homotopic in a directed sense, see, e.g., \cite{Goubault, FGHMR}. Additionally, the homology of an HDA model of a concurrent system contains global information about the independence of processes and components of the system \cite{labels, wehda}. Further connections between algebraic topology and concurrency theory are developed in \cite{Goubault, FajstrupGR, FGHMR}.
\end{sloppypar}

The purpose of this paper is to show that the topology of an HDA model of a concurrent system can be arbitrarily complex. More precisely, we show that for every (nonempty) connected polyhedron there exists a shared-variable system whose HDA model has the same homotopy type as the polyhedron. This is similar in spirit to \cite{ZiemianskiOnExecutionSpaces}, where it is shown that for every connected polyhedron there exists a PV-program (a particular kind of shared-variable system) whose execution space contains a connected component with the same homotopy type as the polyhedron. This paper is also related to \cite{FajstrupCubicalLocal}, where in particular it is shown that every polyhedron admits a cubical local partial order. In fact, the first step in the proof of our result is to show that the cubical barycentric subdivision of a simplicial complex can be constructed as a precubical set. This actually strengthens \cite[Cor. 3.13]{FajstrupCubicalLocal} because it shows that no further subdivision of the cubical barycentric subdivision is needed to equip a polyhedron with a cubical local partial order. 

The paper is organized as follows. The precubical set corresponding to the cubical barycentric subdivision of a simplicial complex is constructed in Section \ref{SecK}. In Section \ref{SecP}, we turn this precubical set into an HDA, which we show to be an HDA model of the transition system given by its 1-skeleton. In the next section, we show that one can replace this HDA by a homotopy equivalent accessible one, i.e., an HDA where all states are reachable by a directed path from the initial state. In the last section, we then show that this accessible HDA is isomorphic to the HDA model of a shared-variable system.

\section{The simplicial complex \texorpdfstring{\(K\)}{} and its cubical barycentric subdivision \texorpdfstring{\(P\)}{}} \label{SecK}

Throughout this paper, we consider a connected abstract simplicial complex \(K\) with vertices \(1, \dots, N\) and the associated polyhedron \(|K|\), which we view as a subspace of the standard \((N-1)\)-simplex \(\Delta^{N-1} \subseteq \R^{N}\). More precisely, we define \(|K|\) to be the subspace \(\bigcup \limits_{\sigma \in K} |\sigma| \subseteq \Delta^{N-1}\), where for a simplex \(\sigma \in K\), \(|\sigma|\) is the geometric simplex in \(\R^{N}\) with vertices \(e_i = (0, \dots, 0, \underset{i}{1}, 0, \dots 0)\), \(i \in \sigma\). In this section, we construct the cubical barycentric subdivision of \(K\) as a precubical set.

\subsection*{Precubical sets}

Let us briefly recall some basic concepts about precubical sets. A \emph{precubical set} is a graded set \(X = (X_n)_{n \geq 0}\) with \emph{face maps} \(d^k_i\colon X_n \to X_{n-1}\) \(({n>0,}\;{k\in\{0,1\}},\; {i \in \{1, \dots, n\}})\) satisfying the relations \(d^k_i\circ d^l_{j}= d^l_{j-1}\circ d^k_i\) \(({k,l \in \{0,1\},\; i<j})\). If \(x\in X_n\), we say that \(x\) is of  \emph{degree} or \emph{dimension} \(n\). The elements of degree \(n\) are called the \emph{\(n\)-cubes} of \(X\). The elements of degree \(0\) are also called the \emph{vertices} of \(X\), and the \(1\)-cubes are also called the \emph{edges} of \(X\). Precubical sets form a category in which  morphisms are morphisms of graded sets that are compatible with the face maps. A \emph{precubical subset} of a precubical set \(X\) is a graded subset of \(X\) that is stable under the face maps. The \emph{tensor product} of two precubical sets \(X\) and \(Y\) is the precubical set \(X\otimes Y\) given by \[(X\otimes Y)_n = \coprod \limits_{p+q = n} X_p\times Y_q \] 
and
\[d_i^k(x,y) = \left\{ \begin{array}{ll} (d_i^kx,y), & 1\leq i \leq p,\\
	(x,d_{i-p}^ky), & p < i \leq n,
\end{array}\right. \quad (x,y) \in X_p\times Y_q.\] The \emph{geometric realization} of a precubical set \(X\) is the quotient space \[|X|=\left(\coprod _{n \geq 0} X_n \times [0,1]^n\right)/\sim\] where the sets \(X_n\) are given the discrete topology and the equivalence relation is generated by
\[(d^k_ix,u) \sim (x,\delta_i^k(u)), \quad  x \in X_{n+1},\; u\in [0,1]^n,\; i \in  \{1, \dots, n+1\},\; k \in \{0,1\}.\]
Here the map \(\delta^k_i\colon [0,1]^n \to [0,1]^{n+1}\) is defined by \[\delta^k_i(u_1, \dots, u_n) = (u_1, \dots, u_{i-1},k,u_i, \dots, u_n).\]
The geometric realization is functorial. For a morphism of precubical sets \(f\colon X \to Y\), the continuous map \(|f|\colon |X| \to |Y|\) is given by \(|f|([x,u])= [f(x),u]\). The geometric realization of a precubical set is a CW complex. The \(n\mbox{-}\)skeleton of \(|X|\) is the geometric realization of the precubical subset \(X_{\leq n}\) of \(X\) defined by \((X_{\leq n})_i = X_i\) for \(i \leq n\) and \((X_{\leq n})_i = \emptyset\) for \(i > n\).

\subsection*{The precubical set \(P\)}

The \emph{cubical barycentric subdivision} of \(K\) is the precubical set \(P\) where the elements of \(P_n\) are pairs \((\tau, \sigma)\) of simplexes of \(K\) such that \(\tau\) is a face of \(\sigma\) and \(\sigma \setminus \tau\) has \(n\) elements, see Figure \ref{cube} for a picture. 
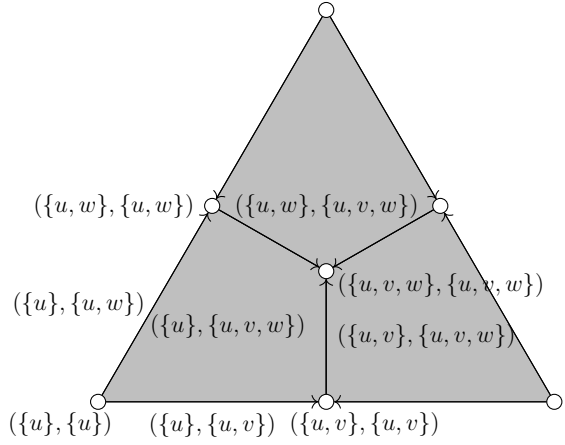
\begin{figure}[t]
	\center
	\begin{tikzpicture}[initial text={},on grid, scale=1]

		\path[draw, fill=lightgray] (0,0)--(3,0)--(3,1.732051)--(1.5,2.598076)--cycle;    
		\path[draw, fill=lightgray]
		(3,0)--(3,1.732051)--(4.5,2.598076)--(6,0)--cycle;
		\path[draw, fill=lightgray]
		(3,1.732051)--(4.5,2.598076)--(3,5.196152)--(1.5,2.598076)--cycle;
		
		\node[state,minimum size=0pt,inner sep =2pt,fill=white] (q_0) at (0,0)  {};
		\node[state,minimum size=0pt,inner sep =2pt,fill=white] (q_1) at (3,0)  {};
		\node[state,minimum size=0pt,inner sep =2pt,fill=white] (q_2) at (6,0)  {};
		\node[state,minimum size=0pt,inner sep =2pt,fill=white] (q_3) at (1.5,2.598076)  {};
		\node[state,minimum size=0pt,inner sep =2pt,fill=white] (q_4) at (3,1.732051)  {};
		\node[state,minimum size=0pt,inner sep =2pt,fill=white] (q_5) at (4.5,2.598076)  {};
		\node[state,minimum size=0pt,inner sep =2pt,fill=white] (q_6) at (3,5.196152)  {};
		
		\path[->] 
		(q_0) edge[below] node[] {\scalebox{0.75}{$(\{u\},\{u,v\})$}} (q_1)
		(q_0) edge[left] node[] {\scalebox{0.75}{$(\{u\},\{u,w\})$}} (q_3)
		(q_1) edge[right] node[] {\scalebox{0.75}{$(\{u,v\},\{u,v,w\})$}} (q_4)
		(q_3) edge[above right] node[] {} (q_4)
		(q_5) edge[above left] node[] {} (q_4)
		(q_2) edge[above right] node[] {} (q_1)
		(q_2) edge[above right] node[] {} (q_5)
		(q_6) edge[above right] node[] {} (q_3)
		(q_6) edge[above right] node[] {} (q_5);
		
		\node at (1.7,1) {\scalebox{0.75}{$(\{u\},\{u,v,w\})$}};
		\node at (-0.5,-0.3) {\scalebox{0.75}{$(\{u\},\{u\})$}};
		\node at (3.5,-0.3) {\scalebox{0.75}{$(\{u,v\},\{u,v\})$}};
		\node at (4.5,1.55) {\scalebox{0.75}{$(\{u,v,w\},\{u,v,w\})$}};
		\node at (0.2,2.598076) {\scalebox{0.75}{$(\{u,w\},\{u,w\})$}};
		\node at (3.0,2.6)
		{\scalebox{0.75}{$(\{u,w\},\{u,v,w\})$}};
	\end{tikzpicture}
	\caption{A 2-cube \((\{u\},\{u,v,w\})\) and its faces}\label{cube}
\end{figure} 
Considering the natural order on the set of vertices of \(K\), the face maps of \(P\) are defined as follows: if \(\sigma \setminus \tau = \{w_1 < \cdots < w_n\}\), we set 
\[d^0_i(\tau, \sigma) = (\tau, \sigma\setminus\{w_i\} )\] and \[d^1_i(\tau, \sigma) = (\tau\cup \{w_i\}, \sigma).\] 
One easily checks that \(P\) is indeed a precubical set. The remainder of this section is devoted to the proof that \(|P|\approx |K|\).

\subsection*{The map \texorpdfstring{\(f\colon |P| \to |K|\)}{}}

Let \((\tau, \sigma) \in P_n\), and suppose that \[\tau = \{v_1 < \cdots < v_r\}\quad \mbox{and} \quad  \sigma \setminus \tau = \{w_1 < \cdots < w_n\}.\] 
We decompose the standard \(n\)-cube \([0,1]^n\) as the union of the \(n\)-simplexes
\[\Delta_\theta = \{(t_1, \dots, t_n) \in [0,1]^n\mid  t_{\theta(1)} \geq \cdots \geq t_{\theta(n)}\}, \quad \theta \in S_n\]
and define a continuous map \(f_{\tau, \sigma} \colon [0,1]^n \to |K|\) by setting for an element \((t_1, \dots , t_n) \in \Delta_\theta\), 
\begin{align*}
	\MoveEqLeft{f_{\tau, \sigma}(t_1, \dots , t_n)}\\ &= \tfrac{1 -t_{\theta(1)}}{r}\sum \limits_{i=1}^r e_{v_i} + \tfrac{t_{\theta(1)} -t_{\theta(2)}}{r+1}(\sum \limits_{i=1}^r e_{v_i} + e_{w_{\theta(1)}}) + \cdots \\
	&\quad  +  \tfrac{t_{\theta(n-1)} -t_{\theta(n)}}{r+n-1}(\sum \limits_{i=1}^r e_{v_i} + \sum \limits_{i=1}^{n-1} e_{w_{\theta(i)}}) + \tfrac{t_{\theta(n)}}{r+n}(\sum \limits_{i=1}^r e_{v_i} + \sum \limits_{i=1}^n e_{w_{\theta(i)}})
\end{align*}
(see Figure \ref{f}). If \(\sigma = \tau\), this formula is to be interpreted in such a way that \(f_{\tau, \sigma}(()) = \frac{1}{r} \sum \limits_{i=1}^r e_{v_i}\). Note that \(f_{\tau, \sigma}(t_1, \dots , t_n)\) is a convex combination of barycenters of faces of \(|\sigma|\) and hence itself an element of \(|\sigma|\). By  the following fact, the proof of which is left to the reader, \(f_{\tau, \sigma}\) is well defined:

\begin{lem} \label{intersect}
	Let \(\theta, \psi \in S_n\) and \((t_1, \dots, t_n) \in \Delta_\theta\cap \Delta_\psi\). Then \(t_{\theta(i)} = t_{\psi(i)}\) for all \(i \in \{1, \dots, n\}\). Moreover, if \(1 \leq i_1 < \dots < i_k = n\) are indices such that 
	\[t_{\theta(1)} = \dots = t_{\theta(i_1)} >  t_{\theta(i_1+1)} = \dots > t_{\theta(i_{k-1}+1)} = \dots = t_{\theta(n)},\]
	then \(\{\theta(1), \dots , \theta(i_s)\} = \{\psi(1), \dots, \psi(i_s)\}\) for all \(s \in \{1, \dots, k\}\).	
\end{lem}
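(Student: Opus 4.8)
The plan is to recognize that the whole content of the lemma is the uniqueness of the non-increasing rearrangement of a finite list of reals, plus bookkeeping of repeated values. Membership $(t_1,\dots,t_n)\in\Delta_\theta$ says precisely that $\theta$ arranges the coordinates in non-increasing order, i.e.\ $t_{\theta(1)}\geq t_{\theta(2)}\geq\cdots\geq t_{\theta(n)}$, and likewise for $\psi$. Since both $(t_{\theta(1)},\dots,t_{\theta(n)})$ and $(t_{\psi(1)},\dots,t_{\psi(n)})$ are the non-increasing rearrangement of the same multiset $\{t_1,\dots,t_n\}$, and this rearrangement is unique \emph{as a sequence}, comparing term by term gives $t_{\theta(i)}=t_{\psi(i)}$ for all $i$. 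That settles the first assertion.

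For the second assertion, I would first use the first part to note that the displayed chain of equalities and strict inequalities for $\theta$ holds verbatim with $\psi$ in place of $\theta$; hence the same indices $1\leq i_1<\cdots<i_k=n$ describe the plateau structure of the sorted sequence for both permutations, and it makes sense to write $c_s$ for the common value $t_{\theta(i)}=t_{\psi(i)}$ when $i_{s-1}<i\leq i_s$ (with $i_0:=0$). The displayed chain then says exactly that $c_1>c_2>\cdots>c_k$ are the distinct values occurring among $t_1,\dots,t_n$. The key observation is that
\[
\{\theta(1),\dots,\theta(i_s)\}=\{\,j\in\{1,\dots,n\}\mid t_j\geq c_s\,\}:
\]
if $j=\theta(\ell)$ with $\ell\leq i_s$ then $t_j=t_{\theta(\ell)}\geq t_{\theta(i_s)}=c_s$; conversely, if $t_j\geq c_s$ and $j=\theta(\ell)$, then $t_{\theta(\ell)}\geq c_s=t_{\theta(i_s)}$, and since $t_{\theta(i_s)}>t_{\theta(i_s+1)}$ (for $s<k$; for $s=k$ both sides are all of $\{1,\dots,n\}$) the non-increasing ordering forces $\ell\leq i_s$. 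This description of the set is intrinsic to the point $(t_1,\dots,t_n)$ and involves no reference to $\theta$, so running the identical argument with $\psi$ yields $\{\psi(1),\dots,\psi(i_s)\}=\{\,j\mid t_j\geq c_s\,\}$ as well, and the two sets coincide.

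I do not expect a genuine obstacle here. The only point requiring a little care is to establish the first assertion before speaking of ``the'' blocks, so that one is entitled to use the same indices $i_1,\dots,i_k$ for $\theta$ and $\psi$; once that is in place the rest is the elementary identification of $\{\theta(1),\dots,\theta(i_s)\}$ with a superlevel set of the coordinate function. The degenerate case $n=0$ is vacuous.
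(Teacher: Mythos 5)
Your proof is correct and complete: the first assertion is exactly the uniqueness of the non-increasing rearrangement of the multiset \(\{t_1,\dots,t_n\}\), and the identification \(\{\theta(1),\dots,\theta(i_s)\}=\{\,j \mid t_j\geq t_{\theta(i_s)}\,\}\), being intrinsic to the point and independent of the permutation, settles the second. The paper gives no proof of this lemma (it is explicitly left to the reader), so there is nothing to compare against; your argument is the natural one the authors evidently had in mind, and the order of steps --- first matching the sorted values, then using the common plateau structure --- is handled correctly.
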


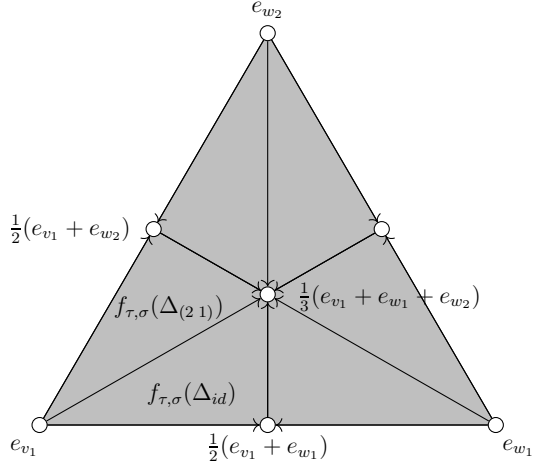
\begin{figure}[t]
	\center
	\begin{tikzpicture}[initial text={},on grid, scale=1]

		\path[draw, fill=lightgray] (0,0)--(3,0)--(3,1.732051)--(1.5,2.598076)--cycle;    
		\path[draw, fill=lightgray]
		(3,0)--(3,1.732051)--(4.5,2.598076)--(6,0)--cycle;
		\path[draw, fill=lightgray]
		(3,1.732051)--(4.5,2.598076)--(3,5.196152)--(1.5,2.598076)--cycle;
		
		\node[state,minimum size=0pt,inner sep =2pt,fill=white] (q_0) at (0,0)  {};
		\node[state,minimum size=0pt,inner sep =2pt,fill=white] (q_1) at (3,0)  {};
		\node[state,minimum size=0pt,inner sep =2pt,fill=white] (q_2) at (6,0)  {};
		\node[state,minimum size=0pt,inner sep =2pt,fill=white] (q_3) at (1.5,2.598076)  {};
		\node[state,minimum size=0pt,inner sep =2pt,fill=white] (q_4) at (3,1.732051)  {};
		\node[state,minimum size=0pt,inner sep =2pt,fill=white] (q_5) at (4.5,2.598076)  {};
		\node[state,minimum size=0pt,inner sep =2pt,fill=white] (q_6) at (3,5.196152)  {};
		
		\path[->] 
		(q_0) edge[below] node[] {\scalebox{0.75}{}} (q_1)
		(q_0) edge[left] node[] {\scalebox{0.75}{}} (q_3)
		(q_1) edge[right] node[] {\scalebox{0.75}{}} (q_4)
		(q_3) edge[above right] node[] {} (q_4)
		(q_5) edge[above left] node[] {\scalebox{0.75}{}} (q_4)
		(q_2) edge[above right] node[] {} (q_1)
		(q_2) edge[above right] node[] {} (q_5)
		(q_6) edge[above right] node[] {} (q_3)
		(q_6) edge[above right] node[] {} (q_5)
		(q_0) edge[above right] node[] {} (q_4)
		(q_2) edge[above right] node[] {} (q_4)
		(q_6) edge[above right] node[] {} (q_4);
		
		\node at (1.7,1.55) {\scalebox{0.75}{$f_{\tau,\sigma}(\Delta_{(2\; 1)})$}};
		\node at (2,0.4) {\scalebox{0.75}{$f_{\tau,\sigma}(\Delta_{id})$}};
		\node at (-0.2,-0.3) {\scalebox{0.75}{$e_{v_1}$}};
		\node at (6.3,-0.3) {\scalebox{0.75}{$e_{w_1} $}};
		\node at (3,5.5) {\scalebox{0.75}{$e_{w_2}$}};
		\node at (3,-0.3) {\scalebox{0.75}{$\frac{1}{2}(e_{v_1}+e_{w_1})$}};
		\node at (4.6,1.7) {\scalebox{0.75}{$\frac{1}{3}(e_{v_1}+e_{w_1}+e_{w_2})$}};
		\node at (0.4,2.6) {\scalebox{0.75}{$\frac{1}{2}(e_{v_1}+e_{w_2})$}};
				
	\end{tikzpicture}
	\caption{The image of the map \(f_{\tau, \sigma}\) for \(\tau =\{v_1\}\) and \(\sigma \setminus \tau = \{w_1<w_2\}\)}\label{f}
\end{figure}

The next lemma shows that a well-defined continuous map \(f \colon |P| \to |K|\) is given by \[f([(\tau, \sigma),(t_1, \dots, t_n)]) = f_{\tau, \sigma}(t_1, \dots, t_n), \quad (\tau, \sigma) \in P_n,\, (t_1, \dots, t_n) \in [0,1]^n.\]

\begin{lem}
	Let \((\tau, \sigma) \in P_n\) \((n \geq 1)\), and suppose that \(\tau = \{v_1 < \cdots < v_r\}\) and \(\sigma \setminus \tau = \{w_1 < \cdots < w_{n}\}\). Then 
	\(f_{\tau, \sigma\setminus \{w_i\}} = f_{\tau, \sigma} \circ \delta^0_i \colon [0,1]^{n-1} \to |K|\) and \(f_{\tau \cup \{w_i\}, \sigma} = f_{\tau, \sigma} \circ \delta^1_i \colon [0,1]^{n-1} \to |K|\). 
\end{lem}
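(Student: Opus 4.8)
I would establish both identities pointwise. Since the maps $f_{\tau,\sigma}$, $f_{\tau,\sigma\setminus\{w_i\}}$ and $f_{\tau\cup\{w_i\},\sigma}$ are already known to be well defined (by the remark preceding Lemma \ref{intersect}), the value of each at a given point can be computed using any simplex $\Delta_\theta$ of the decomposition that contains the point. So fix $t=(t_1,\dots,t_{n-1})\in[0,1]^{n-1}$ and choose $\theta'\in S_{n-1}$ with $t\in\Delta_{\theta'}$, i.e. $t_{\theta'(1)}\geq\cdots\geq t_{\theta'(n-1)}$.

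For $f_{\tau,\sigma\setminus\{w_i\}}=f_{\tau,\sigma}\circ\delta^0_i$, put $u=\delta^0_i(t)$, so $u_i=0$ and $u_m=t_m$ for $m<i$, $u_m=t_{m-1}$ for $m>i$. Define $\theta\in S_n$ by $\theta(n)=i$ and, for $1\leq j\leq n-1$, $\theta(j)=\theta'(j)$ if $\theta'(j)<i$ and $\theta(j)=\theta'(j)+1$ if $\theta'(j)\geq i$. Then $u_{\theta(j)}=t_{\theta'(j)}$ for $j\leq n-1$ while $u_{\theta(n)}=0$, so $u\in\Delta_\theta$ and I may use $\Delta_\theta$ to evaluate $f_{\tau,\sigma}(u)$. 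The summand $\tfrac{u_{\theta(n)}}{r+n}(\cdots)$ then vanishes. Writing $(\sigma\setminus\{w_i\})\setminus\tau=\{w'_1<\cdots<w'_{n-1}\}$ (so $w'_j=w_j$ for $j<i$ and $w'_j=w_{j+1}$ for $j\geq i$), the definition of $\theta$ gives $w_{\theta(j)}=w'_{\theta'(j)}$ for all $j\leq n-1$. Substituting this together with $u_{\theta(j)}=t_{\theta'(j)}$ turns the remaining summands of $f_{\tau,\sigma}(u)$ into exactly the summands of the defining formula for $f_{\tau,\sigma\setminus\{w_i\}}(t)$ evaluated on $\Delta_{\theta'}$, which is the desired equality.

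For $f_{\tau\cup\{w_i\},\sigma}=f_{\tau,\sigma}\circ\delta^1_i$, put $u=\delta^1_i(t)$, so $u_i=1$, $u_m=t_m$ for $m<i$ and $u_m=t_{m-1}$ for $m>i$. Now define $\theta\in S_n$ by $\theta(1)=i$ and, for $2\leq j\leq n$, $\theta(j)=\theta'(j-1)$ if $\theta'(j-1)<i$ and $\theta(j)=\theta'(j-1)+1$ otherwise; then $u_{\theta(1)}=1$ and $u_{\theta(j)}=t_{\theta'(j-1)}$ for $j\geq 2$, so $u\in\Delta_\theta$. In $f_{\tau,\sigma}(u)$ the summand $\tfrac{1-u_{\theta(1)}}{r}(\cdots)$ vanishes and the summand with $j=1$ equals $\tfrac{1-t_{\theta'(1)}}{r+1}(\sum_{l=1}^r e_{v_l}+e_{w_i})$. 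Writing $\tau\cup\{w_i\}=\{v'_1<\cdots<v'_{r+1}\}$ we have $\sum_{l=1}^{r+1}e_{v'_l}=\sum_{l=1}^r e_{v_l}+e_{w_i}$, so this is precisely the leading summand of $f_{\tau\cup\{w_i\},\sigma}(t)$. Writing $\sigma\setminus(\tau\cup\{w_i\})=\{w'_1<\cdots<w'_{n-1}\}$ as before, one has $w_{\theta(j)}=w'_{\theta'(j-1)}$ for $2\leq j\leq n$; reindexing $j\mapsto j-1$ (note $r+j=(r+1)+(j-1)$) then identifies the remaining summands of $f_{\tau,\sigma}(u)$ with those of $f_{\tau\cup\{w_i\},\sigma}(t)$. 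The case $n=1$ follows directly from the formulas, using the convention $f_{\tau,\tau}(())=\tfrac1r\sum_{l=1}^r e_{v_l}$.

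The only real difficulty here is keeping the bookkeeping straight: one must set up the passage from $\theta'\in S_{n-1}$ to $\theta\in S_n$ and the relabeling of $\sigma\setminus\tau$ to $(\sigma\setminus\{w_i\})\setminus\tau$ (resp. $\sigma\setminus(\tau\cup\{w_i\})$) so that the index shift at position $i$ is applied consistently in both; once the identities $u_{\theta(j)}=t_{\theta'(\cdot)}$ and $w_{\theta(j)}=w'_{\theta'(\cdot)}$ are in place, the rest is a term-by-term comparison of the two convex combinations.
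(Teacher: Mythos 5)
Your argument is correct and follows essentially the same route as the paper: for the first identity the paper evaluates $f_{\tau,\sigma}$ at $\delta^0_i(t)$ on the simplex $\Delta_\psi$ obtained from $\theta'$ by the very same index shift ($\psi(n)=i$, $\psi(j)=\theta'(j)$ or $\theta'(j)+1$), and compares terms exactly as you do. The only difference is that the paper leaves the second identity to the reader, whereas you carry out the analogous computation (with $\theta(1)=i$ and the degree shift $r\mapsto r+1$), which is the intended argument.
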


\begin{proof}
	We prove only the first equality and leave the similar proof of the second to the reader. Let us first note that 
	\begin{align*}
		( \sigma \setminus \{ w_i \} ) \setminus \tau & =  \{ w_1 < \dots < w_{i-1} < w_{i+1} < \dots < w_{n} \} \\ & =  \{ \bar w_1 < \dots < \bar w_{i-1} < \bar w_i < \dots < \bar w_{n-1}\}
	\end{align*} 	
	with 	
	\begin{align*}
		\bar w_j =
		\begin{cases}
			w_j, &  j<i, \\
			w_{j+1}, &  j\geq i.
		\end{cases}
	\end{align*}
	Let $\theta \in S_{n-1}$ and $(t_1, \dots, t_{n-1}) \in \Delta_\theta \subseteq  [0,1]^{n-1}$. Then $t_{\theta(1)} \geq \dots \geq t_{\theta(n-1)}$, and so defining $(x_1, \dots, x_n)  = (t_1, \dots, t_{i-1}, 0, t_i, \dots, t_{n-1})$, we will have that $ (x_1, \dots, x_n) \in \Delta_\psi \subseteq [0,1]^n$, with $\psi \in S_n$ defined by  
	\begin{align*}
		\psi (j) =
		\begin{cases}
			\theta(j), &  j<n, \medspace \theta(j)<i, \\
			\theta(j) + 1, &  j<n, \medspace \theta(j)\geq i, \\
			i, & j=n.
		\end{cases}
	\end{align*}
	Indeed, 
	\begin{align*}
		x_{\psi (j)} &=
		\begin{cases}
			x_{\theta(j)} = t_{\theta(j)}, &  j<n, \medspace \theta(j)<i, \\
			x_{\theta(j) + 1} = t_{\theta(j)}, &   j<n, \medspace \theta(j)\geq i, \\
			x_i = 0, & j=n
		\end{cases}  \quad= 
		\begin{cases}
			t_{\theta(j)}, &  j<n, \\
			0, & j=n.
		\end{cases} 
	\end{align*}
	So $x_{\psi(1)} \geq \dots \geq x_{\psi(n)}$, i.e.,  $(x_1, \dots, x_n) \in \Delta_\psi$. Note also that 
	\begin{align*}
		w_{\psi (j)} & =
		\begin{cases}
			w_{\theta(j)} = \bar w_{\theta(j)}, &  j<n, \medspace \theta(j)<i, \\
			w_{\theta(j) + 1} = \bar w_{\theta(j)}, & j<n, \medspace \theta(j)\geq i, \\
			w_i, & j=n
		\end{cases} \quad =
		\begin{cases}
			\bar w_{\theta(j)}, & j<n, \\
			w_i, & j=n.
		\end{cases} 
	\end{align*}
	We finally have that
	\begin{align*}
		\MoveEqLeft{f_{\tau, \sigma} \circ \delta_i^0 (t_1 \dots, t_{n-1})} = 	f_{\tau, \sigma} (t_1 \dots, t_{i-1}, 0, t_i, \dots t_{n-1}) = f_{\tau, \sigma} (x_1, \dots, x_{n}) \\  
		& = \tfrac{1-x_{\psi (1)}}{r} \sum_{l=1}^{r} e_{v_l} + \tfrac{x_{\psi (1)} - x_{\psi(2)}}{r+1} (\sum_{l=1}^{r} e_{v_l} + e_{w_{\psi(1)}})+ \dots \\ &\quad   + \tfrac{x_{\psi (n-1)} - x_{\psi(n)}}{r+n-1} (\sum_{l=1}^{r} e_{v_l} + \sum_{l=1}^{n-1} e_{w_{\psi(l)}})   + \tfrac{x_{\psi (n)}}{r+n} (\sum_{l=1}^{r} e_{v_l} + \sum_{l=1}^{n} e_{w_{\psi(l)}}) \\ 
		& =  \tfrac{1-t_{\theta (1)}}{r} \sum_{l=1}^{r} e_{v_l} + \tfrac{t_{\theta (1)} - t_{\theta(2)}}{r+1} (\sum_{l=1}^{r} e_{v_l} + e_{\bar w_{\theta(1)}}) + \dots\\ & \quad  + \tfrac{t_{\theta (n-1)} - 0}{r+n-1} (\sum_{l=1}^{r} e_{v_l} + \sum_{l=1}^{n-1} e_{\bar w_{\theta(l)}})   + \tfrac{0}{r+n} (\sum_{l=1}^{r} e_{v_l} + \sum_{l=1}^{n-1} e_{\bar w_{\theta(l)}} + e_{w_i}) \\		
		& = f_{\tau, \sigma \setminus \{ w_i \}} (t_1, \dots, t_{n-1}). \qedhere
	\end{align*}	
\end{proof}

We can now prove the main result of this section:

\begin{theor} \label{homeo}
	The map \(f \colon |P| \to |K|\) is a homeomorphism.
\end{theor}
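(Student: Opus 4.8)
The plan is to observe that $f$ is a continuous bijection between a compact space and a Hausdorff space, and to establish bijectivity by recognizing $f$ as a disguised version of the classical homeomorphism from the affinely realized ordinary barycentric subdivision $\mathrm{sd}(K)$ onto $|K|$. First, since $K$ is finite, $P$ has only finitely many cubes, so $|P|$ is compact; $|K|\subseteq\R^N$ is Hausdorff; and $f$ is continuous, being well defined on the quotient by the preceding lemma and restricting on each $\{(\tau,\sigma)\}\times[0,1]^n$, after composition with the quotient map, to the continuous map $f_{\tau,\sigma}$ (continuity across the $\Delta_\theta$ being Lemma \ref{intersect}). Thus it remains to prove $f$ bijective. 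The guiding observation is that, writing $b_\rho=\tfrac1{|\rho|}\sum_{i\in\rho}e_i$ for the barycenter of a simplex $\rho$ of $K$, the displayed formula for $f_{\tau,\sigma}$ says exactly that on $\Delta_\theta$ this map is affine with
\[
f_{\tau,\sigma}(t)=\sum_{j=0}^{n}c_j\,b_{\sigma_j},\qquad \sigma_j:=\tau\cup\{w_{\theta(1)},\dots,w_{\theta(j)}\},
\]
where $c_0=1-t_{\theta(1)}$, $c_j=t_{\theta(j)}-t_{\theta(j+1)}$ for $0<j<n$, and $c_n=t_{\theta(n)}$; here $\tau=\sigma_0\subsetneq\dots\subsetneq\sigma_n=\sigma$, the $c_j\ge 0$ sum to $1$, and $c_0,c_n>0$ whenever $(t)\in(0,1)^n$.

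For surjectivity I would use the classical fact that the affine realization of $\mathrm{sd}(K)$ in $|K|$, placing the vertex indexed by $\rho$ at $b_\rho$, is a simplicial complex with underlying space $|K|$; hence every $x\in|K|$ lies in the relative interior of a unique simplex $\langle b_{\rho_0},\dots,b_{\rho_m}\rangle$ of this complex, where $\rho_0\subsetneq\dots\subsetneq\rho_m$ is a chain in $K$, and has a unique expansion $x=\sum_j c_j b_{\rho_j}$ with all $c_j>0$. Given $x$, I would choose a linear order $\theta$ of $\rho_m\setminus\rho_0$ refining the chain (so that $\rho_0\cup\{w_{\theta(1)},\dots,w_{\theta(j)}\}=\rho_j$) and set $t_{\theta(j)}:=c_j+\dots+c_m$. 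Then $(t)\in\Delta_\theta\cap(0,1)^m$, and $f_{\rho_0,\rho_m}(t)=x$ by the formula above; so $f$ is onto.

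For injectivity I would first show, using the defining relations of $|P|$ together with $d^0_i(\tau,\sigma)=(\tau,\sigma\setminus\{w_i\})$ and $d^1_i(\tau,\sigma)=(\tau\cup\{w_i\},\sigma)$, that every point of $|P|$ has a representative $[(\tau,\sigma),(t)]$ with $(t)\in(0,1)^n$ (iteratively delete a coordinate equal to $0$ or $1$, passing to a face of strictly smaller dimension). For such a representative, put $x:=f_{\tau,\sigma}(t)$. By the formula and $c_0,c_n>0$, the simplexes $\sigma_j$ with $c_j>0$ form a chain with minimum $\tau$ and maximum $\sigma$, and by uniqueness of the carrier in $\mathrm{sd}(K)$ this chain is the carrier chain of $x$; since $|\sigma_j|=|\tau|+j$, this determines $\tau$, $\sigma$, the set $\{j:c_j>0\}$, and the values of all $c_j$ in terms of $x$ alone, and then $t_{\theta(j)}=c_j+\dots+c_n$ is determined as well. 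The one delicate point is that this value must be independent of the sorting permutation $\theta$ of $(t)$; this is exactly where Lemma \ref{intersect} is needed, as it forces the sets $\{\theta(1),\dots,\theta(i)\}$ for the breakpoints $i$ of $(t)$ to be $\theta$-independent, so that each coordinate $t_a$ is recovered from $x$ as the common value of the $t_{\theta(j)}$ over the breakpoint block containing the $\theta$-rank of $w_a$. Hence the interior representative of a point of $|P|$ is uniquely determined by its image, so $f$ is injective; with surjectivity and compactness, $f$ is a homeomorphism.

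I expect the injectivity step to be the main obstacle---specifically, the $\theta$-independence of the reconstructed coordinates, which is equivalent to each $f_{\tau,\sigma}$ being injective on its cube; the rest is routine bookkeeping around the classical barycentric subdivision. An alternative, more structural route (which I would fall back on if the bookkeeping gets unwieldy) is to use Lemma \ref{intersect} and the preceding lemma to glue the triangulations $[0,1]^n=\bigcup_{\theta\in S_n}\Delta_\theta$ of the cubes of $P$ into a global triangulation of $|P|$, to identify the resulting simplicial complex with $\mathrm{sd}(K)$---its simplexes corresponding to the chains of simplexes of $K$, each canonically carried by the cube $(\min,\max)$ of the chain---and to observe that under this identification $f$ becomes literally the classical homeomorphism $|\mathrm{sd}(K)|\to|K|$; this trades the obstacle above for the combinatorial identification of the glued complex with $\mathrm{sd}(K)$.
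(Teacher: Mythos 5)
Your proposal is correct, but it takes a genuinely different route from the paper's. The paper never invokes the classical theorem that the affinely realized barycentric subdivision \(\mathrm{sd}(K)\) triangulates \(|K|\); instead it writes down an explicit inverse \(g\colon |K|\to |P|\) by a closed formula (sort the barycentric coordinates \(s_i\) of \(x\) decreasingly, let \(\tau\) be the set of vertices realizing the maximal coordinate, and set \(t_{\theta(i)}=(m+i)s_{\alpha(m+i)}+\sum_{j=m+i+1}^{n}s_{\alpha(j)}\)), checks with Lemma \ref{intersect} that \(g\) is independent of the sorting permutation, asserts \(f\circ g\) and \(g\circ f\) are the identities by direct (``tedious but rather straightforward'') computation, and finishes with the same compactness argument you use. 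Your argument outsources exactly those verifications to the carrier-uniqueness statement for \(\mathrm{sd}(K)\): once you observe that \(f_{\tau,\sigma}(t)=\sum_{j}c_j\,b_{\sigma_j}\) along the chain \(\sigma_0\subsetneq\cdots\subsetneq\sigma_n\), surjectivity and the reconstruction of the interior representative of a point of \(|P|\) from the carrier chain of its image become essentially formal, and the compactness argument closes the proof. What this buys is conceptual transparency (\(f\) is visibly the standard homeomorphism \(|\mathrm{sd}(K)|\to|K|\) after triangulating the cubes, and Lemma \ref{intersect} is the only combinatorial input); the cost is reliance on the standard subdivision theorem, which the paper's self-contained computation in effect reproves. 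One small repair is needed in your surjectivity step: the prescription ``\(\rho_0\cup\{w_{\theta(1)},\dots,w_{\theta(j)}\}=\rho_j\)'' together with ``\(t_{\theta(j)}:=c_j+\cdots+c_m\)'' tacitly assumes the carrier chain is saturated, i.e.\ \(m=|\rho_m\setminus\rho_0|\); for a chain that skips cardinalities you should instead set \(t_w=\sum_{j:\,w\in\rho_j}c_j\) for each \(w\in\rho_m\setminus\rho_0\), which produces repeated \(t\)-values lying in the appropriate \(\Delta_\theta\) --- exactly the block structure your injectivity paragraph already handles --- so nothing breaks, but as literally written the formula is only defined in the saturated case.
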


\begin{proof}
	In order to define a map \(g\colon |K| \to |P|\), consider an element \(x\in |K|\). Let \(\sigma = {\{u_1 < \cdots < u_n\}}\) be the unique simplex of \(K\) such that \(x = \sum \limits_{i = 1}^n s_ie_{u_i}\) for (unique) numbers \(s_i > 0\) such that \(\sum \limits_{i = 1}^n s_i = 1\). Choose a permutation \(\alpha \in S_n\) such that \(s_{\alpha(1)} \geq \cdots \geq s_{\alpha(n)}\). 
	Let \[m = \max\{i \in \{1, \dots, n\}\,|\, s_{\alpha(i)} = s_{\alpha (1)}\},\] and set \(\tau = \{u_{\alpha(1)}, \dots, u_{\alpha(m)}\}\). Let \(\phi\) be the unique order-isomorphism \[\{\alpha(m+1), \dots, \alpha(n)\} \to \{1, \dots, n-m\},\] and define \(\theta \in S_{n-m}\) by
	\(\theta(i) = \phi(\alpha(m+i))\). Set \[t_{\theta(i)} = (m+i)s_{\alpha(m+i)} + \sum \limits_{j = m+i+1}^ns_{\alpha(j)}.\] Then \[0 \leq t_{\theta(n-m)} \leq \cdots \leq t_{\theta(1)} \leq \sum \limits_{j = 1}^n s_{\alpha(j)} = 1.\] Hence \({(t_1, \dots, t_{n-m}) \in \Delta_{\theta}} \subseteq [0,1]^{n-m}\). We set \[g(x) = [(\tau,\sigma),(t_1, \dots, t_{n-m})].\]
	Using Lemma \ref{intersect}, one easily checks that \(g(x)\) does not depend on the choice of the permutation \(\alpha\). We have thus defined a map \(g\colon |K| \to |P|\). Tedious but rather straightforward computations now show that \(f\circ g = id_{|K|}\) and \(g\circ f = id_{|P|}\). Since \(f\) is a continuous map between compact Hausdorff spaces, it follows that \(f\) is a homeomorphism.
\end{proof}

\section{The HDA \texorpdfstring{\(\P\)}{}} \label{SecP}

A \emph{higher-dimensional automaton} (HDA) is a tuple \(\A = (P_{\A},I_{\A},\Sigma_{\A}, \lambda_{\A})\) where  $P_{\A}$ is a precubical set, ${I_{\A} \in (P_{\A})_0}$ is a vertex, called the \emph{initial state}, $\Sigma_{\A}$ is a finite set of \emph{labels}, and $\lambda_\A \colon (P_{\A})_1 \to \Sigma_{\A}$ is a map, called the \emph{labeling function}, such that $\lambda_{\A} (d_i^0x) = \lambda_{\A} (d_i^1x)$ for all $x \in (P_{\A})_2$ and $i \in \{1,2\}$ \cite{vanGlabbeek}. The vertices of an HDA are also called its \emph{states}. 
Originally, an HDA is also equipped with a set of final states, but since we will never need final states, we omit this part of the structure. HDAs form a category in which a morphism from an HDA \(\A\) to an HDA \(\B\) is a pair \((f,g)\) consisting of a morphism of precubical sets \(f\colon P_\A \to P_\B\) and a map \(g \colon \Sigma_\A \to \Sigma_\B\) such that \(f(I_{\A}) = I_{\B}\) and \(\lambda_{\B}(f(x)) = g(\lambda_{\A}(x))\) for all \(x \in (P_\A)_1\).

We turn the precubical set \(P\) defined in the previous section into an HDA \(\P\) by setting \(P_\P = P\), \(I_\P = (\{1\},\{1\})\), \(\Sigma_\P = \{1, \dots , N\}\), and \(\lambda_\P(\tau, \sigma) = a\)
for \((\tau, \sigma) \in P_1\) with \(\sigma \setminus \tau = \{a\}\). This is indeed an HDA since for \((\tau, \sigma)\in P_2\) with \(\sigma \setminus \tau = \{a < b\}\),
\begin{align*}
	\lambda_\P(d^0_1(\tau, \sigma)) = \lambda_\P(\tau, \tau \cup\{b\}) = b = \lambda_\P(\tau \cup \{a\}, \sigma) = \lambda_\P(d^1_1(\tau, \sigma))
\end{align*}
and 
\begin{align*}
	\lambda_\P(d^0_2(\tau, \sigma)) = \lambda_\P(\tau, \tau \cup\{a\}) = a = \lambda_\P(\tau \cup \{b\}, \sigma) = \lambda_\P(d^1_2(\tau, \sigma)).
\end{align*}

An HDA is said to be \emph{deterministic} if no two edges with the same source have the same label. An HDA is said to be \emph{codeterministic} if no two edges with the same target have the same label. We say that an HDA is \emph{bideterministic} if it is both deterministic and codeterministic.

\begin{prop} \label{Pbidet}
	The HDA \(\P\) is bideterministic.
\end{prop}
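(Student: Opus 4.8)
The plan is to translate the notions of source, target, and label of an edge of \(\P\) into the combinatorial language of \(P\), and then to observe that an edge is recovered both from its source together with its label and from its target together with its label; bideterminism follows at once.

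First I would note that the vertices of \(P\) are exactly the pairs \((\tau,\tau)\) with \(\tau\in K\), since \((\tau,\sigma)\in P_0\) forces \(|\sigma\setminus\tau|=0\). Next, given an edge \((\tau,\sigma)\in P_1\), I would write \(\sigma\setminus\tau=\{a\}\), so that \(\sigma=\tau\cup\{a\}\) with \(a\notin\tau\). By the definition of the face maps, the source of this edge is \(d^0_1(\tau,\sigma)=(\tau,\sigma\setminus\{a\})=(\tau,\tau)\), its target is \(d^1_1(\tau,\sigma)=(\tau\cup\{a\},\sigma)=(\sigma,\sigma)\), and its label is \(\lambda_\P(\tau,\sigma)=a\). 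In particular an edge \((\tau,\sigma)\) is already determined by the pair \(\bigl((\tau,\tau),(\sigma,\sigma)\bigr)\) of its source and target, so \(\P\) has no parallel edges at all.

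For determinism I would take two edges \((\tau,\sigma)\) and \((\tau',\sigma')\) with the same source and the same label \(a\): equality of sources gives \(\tau=\tau'\), and then \(\sigma=\tau\cup\{a\}=\tau'\cup\{a\}=\sigma'\), so the two edges coincide. For codeterminism I would take two edges with the same target and the same label \(a\): equality of targets gives \(\sigma=\sigma'\), whence \(\tau=\sigma\setminus\{a\}=\sigma'\setminus\{a\}=\tau'\), so again they coincide. There is no genuine obstacle here; the only point needing care is getting the face-map bookkeeping right — that \(d^0_1\) removes the unique element of \(\sigma\setminus\tau\) from \(\sigma\) while \(d^1_1\) adjoins it to \(\tau\) — after which each half of the statement is a one-line argument, and I would present the computation of source, target, and label as a short preliminary observation before these two arguments.
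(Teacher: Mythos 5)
Your proof is correct and follows essentially the same route as the paper: identify the source \((\tau,\tau)\), target \((\sigma,\sigma)\), and label \(a\) of an edge \((\tau,\sigma)\) with \(\sigma\setminus\tau=\{a\}\), then recover the edge from source plus label (determinism) and from target plus label (codeterminism). You merely spell out explicitly the codeterminism half that the paper dismisses as ``a similar argument.''
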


\begin{proof}
	Let \((\tau, \tau)\) be a vertex of \(\P\), and let \((\tau, \sigma)\) and \((\tau, \rho)\) be two edges with the same label starting in \((\tau, \tau)\). Suppose that \(\sigma \setminus \tau = \{a\}\) and that \(\rho \setminus \tau = \{b\}\). Then \(a = \lambda_\P(\tau, \sigma) = \lambda _\P(\tau, \rho) = b\). Hence \(\sigma = \tau \cup \{a\} = \tau \cup \{b\} = \rho\), and so the two edges are the same. Thus, \(\P\) is deterministic. A similar argument shows that \(\P\) is codeterministic.	
\end{proof}

\subsection*{HDA models}

An HDA is \emph{extensional} if no two edges with the same endpoints have the same label. If an HDA is deterministic or codeterministic, it is extensional. A \emph{transition system} is a 1-truncated extensional HDA, i.e., an extensional HDA concentrated in degrees \(\leq 1\).

Let \(\T\) be a transition system, and let \(R\) be a relation on the alphabet \(\Sigma_\T\).  The \emph{HDA model of \(\T\) with respect to \(R\)} is the by \cite[Thm. 4.2, Cor. 4.5]{transhda} up to isomorphism uniquely determined HDA ${\mathcal{Q}}$ which satisfies the following conditions:
\begin{description}
	\item[HM1] The \emph{1-skeleton} of \(\Q\), \(\Q_{\leq 1} = ((P_\Q)_{\leq 1}, I_\Q, \Sigma_\Q, \lambda_\Q)\), is \(\T\). 
	\item[HM2] For all $x \in (P_{\mathcal{Q}})_2$, $\lambda_{\mathcal{Q}}(d^0_2x) \,R\, \lambda_{\mathcal{Q}}(d^0_1x)$.		
	\item[HM3] For all $m\geq 2$ and $x,y \in (P_{\mathcal{Q}})_m$, if $d^k_rx = d^k_ry$ for all $r \in\{1,\dots ,m\}$ and $k \in \{0,1\}$, then $x = y$.		
	\item[HM4] ${\mathcal{Q}}$ is maximal with respect to the properties HM1--HM3, i.e., there is no HDA \(\A\) satisfying HM1--HM3 with \(P_\Q \subsetneqq P_\A\).	
\end{description}
Condition HM1 says that \(\Q\) is built on top of \(\T\) by filling in empty cubes. By condition HM2, an empty square may only be filled in if the labels of its edges are related. Condition HM3 ensures that no empty cube is filled in twice in the same way. By condition HM4, all admissible empty cubes are filled in. 

\begin{theor} \label{PHDAmod}
	\(\P\) is the HDA model of its 1-skeleton \(\P_{\leq 1}\) with respect to \(<\).
\end{theor}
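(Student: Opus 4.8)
The plan is to verify that $\P$ satisfies the four conditions HM1--HM4 characterizing the HDA model of $\P_{\leq 1}$ with respect to $<$; the uniqueness statement cited from \cite{transhda} then does the rest. (That $\P_{\leq1}$ really is a transition system, i.e.\ extensional, follows from Proposition~\ref{Pbidet}, since a (co)deterministic HDA is extensional.) Three of the conditions are quick. HM1 holds because $\P$ was built from $\P_{\leq1}$ without touching degrees $\leq1$. For HM2, if $(\tau,\sigma)\in P_2$ with $\sigma\setminus\tau=\{a<b\}$, then $\lambda_\P(d^0_2(\tau,\sigma))=\lambda_\P(\tau,\tau\cup\{a\})=a<b=\lambda_\P(\tau,\tau\cup\{b\})=\lambda_\P(d^0_1(\tau,\sigma))$. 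For HM3, if $(\tau,\sigma),(\tau',\sigma')\in P_m$ ($m\geq2$) have all the same faces, then agreement of $d^0_1$ forces $\tau=\tau'$ and agreement of $d^1_1$ forces $\sigma=\sigma'$, so the two cubes coincide.

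The substance is HM4. I would take an HDA $\A$ with $P_\P\subseteq P_\A$ satisfying HM1--HM3 and prove $P_\A=P_\P$ by induction on dimension, degrees $\leq1$ being handled by HM1. Assume $(P_\A)_k=(P_\P)_k$ for all $k<m$, where $m\geq2$, and let $z\in(P_\A)_m$. Write $\ell_i(z)\in(P_\A)_1=(P_\P)_1$ for the edge obtained from $z$ by applying the face maps $d^0_r$ for all $r\neq i$ (well defined by the precubical identities), put $a_i=\lambda_\A(\ell_i(z))$, and let $v=(d^0_1)^mz$ be the initial vertex of $z$. All the edges $\ell_i(z)$ start at $v$; as $v$ is a vertex of $\P$ we have $v=(\tau,\tau)$ for a unique simplex $\tau$ of $K$, and as $\ell_i(z)$ is then an edge of $\P$ leaving $(\tau,\tau)$ we get $\ell_i(z)=(\tau,\tau\cup\{a_i\})$ with $a_i\notin\tau$ and $\tau\cup\{a_i\}\in K$. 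Applying HM2 to the $2$-face of $z$ spanned by two directions $i<j$ (whose lower faces are $\ell_j(z)$ and $\ell_i(z)$) gives $a_i<a_j$, so $a_1<\cdots<a_m$; and computing the principal edges of each $d^0_iz\in(P_\P)_{m-1}$ in the same way identifies $d^0_iz=(\tau,\ \tau\cup\{a_1,\dots,a_m\}\setminus\{a_i\})$, so in particular $\tau\cup\{a_1,\dots,a_m\}\setminus\{a_i\}\in K$ for every $i$.

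The crux is to upgrade this to $\sigma:=\tau\cup\{a_1,\dots,a_m\}\in K$, after which $(\tau,\sigma)\in(P_\P)_m$. Here I would analyse the face $d^1_1z\in(P_\P)_{m-1}$: its initial vertex is the terminal vertex $(\tau\cup\{a_1\},\tau\cup\{a_1\})$ of $\ell_1(z)$, and, using the precubical identities $d^0_p\circ d^1_1=d^1_1\circ d^0_{p+1}$, its lower faces are $d^0_p(d^1_1z)=d^1_1(d^0_{p+1}z)=(\tau\cup\{a_1\},\sigma\setminus\{a_{p+1}\})$ for $p=1,\dots,m-1$. Writing $d^1_1z=(\tau\cup\{a_1\},\mu)$, this says $\mu\setminus\{c_p\}=\sigma\setminus\{a_{p+1}\}$ for the successive elements $c_p$ of $\mu\setminus(\tau\cup\{a_1\})$; when $m\geq3$ (so there are at least two values of $p$) taking the union over $p$ yields $\mu=\sigma$, hence $\sigma=\mu\in K$. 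The case $m=2$ I would do by hand: there $d^1_1z$ and $d^1_2z$ are edges of $\P$ leaving the vertices $(\tau\cup\{a_1\},\tau\cup\{a_1\})$ and $(\tau\cup\{a_2\},\tau\cup\{a_2\})$ and sharing a terminal vertex, which forces $d^1_1z=(\tau\cup\{a_1\},\tau\cup\{a_1,a_2\})$ and so $\sigma=\tau\cup\{a_1,a_2\}\in K$. In all cases $(\tau,\sigma)\in(P_\P)_m$; a final bookkeeping with the precubical identities shows $d^1_iz=(\tau\cup\{a_i\},\sigma)$ for every $i$, so $z$ and $(\tau,\sigma)$ are $m$-cubes of $\A$ with identical $2m$ faces, and HM3 (which $\A$ satisfies) gives $z=(\tau,\sigma)\in(P_\P)_m$. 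This closes the induction and yields HM4.

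I expect the main obstacle to be exactly the step $\sigma\in K$: the boundary of $z$ a priori only certifies that each $\sigma\setminus\{a_i\}$ is a simplex of $K$, and in a non-flag complex this does not force $\sigma$ itself to be one; the point that makes it work is that the upper face $d^1_1z$ must already lie in $\P$ and therefore secretly records the simplex $\sigma$. Beyond that, the argument is routine but requires careful handling of the precubical face relations in order to pin down the faces $d^0_iz$ and $d^1_iz$ of a cube in terms of its principal edges.
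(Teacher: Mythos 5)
Your proposal is correct and takes essentially the same route as the paper: HM1--HM3 are checked directly, and HM4 is proved by induction on dimension, pinning down all faces of a putative extra \(m\)-cube, producing the candidate \((\tau,\sigma)\in P_m\), and concluding with HM3. The only difference is organizational---where the paper tracks the coordinates of all \(2m\) faces and invokes the parallel-edge-label lemma of the cited reference, you read off the data from the principal edges at the initial vertex, get the ordering from HM2 on \(2\)-faces, and certify \(\sigma\in K\) via the upper face \(d^1_1z\) (the paper does this implicitly through \(d^1_mx\))---but the substance of the argument is the same.
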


\begin{proof}
	Since \(\P\) is deterministic, \(\P_{\leq 1}\) is indeed a transition system. 
	
	HM1 is obvious. 
	
	HM2: Let \((\tau, \sigma)\) be a 2-cube of \(\P\). Suppose that \(\sigma \setminus \tau = \{a < b\}\). We have 
	\[\lambda_\P(d^0_2(\tau, \sigma)) = \lambda_\P(\tau,\tau \cup \{a\}) = a < b = \lambda_\P(\tau,\tau \cup \{b\}) = \lambda_\P(d^0_1(\tau, \sigma)).\]
	
	HM3: Let \(m \geq 2\), and let \((\tau, \sigma), (\tau', \sigma')\in P_m\) such that  \(d^k_r(\tau, \sigma) = d^k_r(\tau',\sigma')\) for all \(r \in\{1,\dots ,m\}\) and \(k \in \{0,1\}\). Suppose that \(\sigma \setminus \tau = \{w_1 < \dots < w_m\}\) and that \(\sigma' \setminus \tau' = \{w_1' < \dots < w_m'\}\). Since \(d^0_1(\tau, \sigma) = d^0_1(\tau',\sigma')\), we have 
	\((\tau, \sigma \setminus \{w_1\}) = (\tau', \sigma' \setminus \{w_1'\})\) and therefore \(\tau = \tau'\). Since \(d^1_1(\tau, \sigma) = d^1_1(\tau',\sigma')\), we have 
	\((\tau \cup \{w_1\}, \sigma ) = (\tau' \cup \{w_1'\}, \sigma' )\) and therefore \(\sigma = \sigma'\). Thus, \((\tau, \sigma) = (\tau', \sigma')\). 
	
	HM4: Let \(\Q\) be an HDA satisfying HM1--HM3 with respect to \(\P_{\leq 1}\) and \(<\) such that \(P_\P = P\) is a precubical subset of \(P_\Q\). By HM1, \((P_\Q)_m = P_m\) for \(m \leq 1\). Let \(m \geq 2\), and suppose inductively that \((P_\Q)_{m-1} = P_{m-1}\). Let \(x \in (P_\Q)_m\). By the inductive hypothesis, \(d^k_ix \in P_{m-1}\) for all \(i\) and \(k\). Write \(d^k_ix = (\tau^k_i, \sigma^k_i)\) and \(\sigma^k_i \setminus \tau^k_i = \{w^k_{i,1} < \dots < w^k_{i,m-1}\}\). Consider \(i \in \{1, \dots, m\}\) and \(j \in \{1, \dots, m-1\}\). If \(m = 2\), we have 
	\begin{align*}
		w^0_{i,j} &= w^0_{i,1} = \lambda_\P(\tau^0_i,\tau^0_i\cup \{w^0_{i,1}\}) = \lambda_\P(\tau^0_i,\sigma^0_i) = 
		\lambda_\P(d^0_ix)\\ &= \lambda_\P(d^1_ix) = \lambda_\P(\tau^1_i, \sigma^1_i) = \lambda_\P(\tau^1_i, \tau^1_i \cup \{w^1_{i,1}\}) = w^1_{i,1} = w^1_{i,j}.
	\end{align*}
	If \(m \geq 3\), we have
	\begin{align*}
		\MoveEqLeft{
		\lambda_\P(d^0_1\cdots d^0_{j-1}d^0_{j+1} \cdots d^0_{m-1}d^0_ix)}\\ &= \lambda_\P(d^0_1\cdots d^0_{j-1}d^0_{j+1} \cdots d^0_{m-1}(\tau^0_i,\tau^0_i\cup \{w^0_{i,1}< \dots < w^0_{i,m-1}\}))\\
		&= \lambda_\P(d^0_1\cdots d^0_{j-1}d^0_{j+1} \cdots d^0_{m-2}(\tau^0_i,\tau^0_i\cup \{w^0_{i,1} < \dots < w^0_{i,m-2}\}))\\
		&= \cdots\\
		&= \lambda_\P(d^0_1\cdots d^0_{j-1}d^0_{j+1} (\tau^0_i,\tau^0_i\cup \{w^0_{i,1} < \dots < w^0_{i,j-1}< w^0_{i,j}< w^0_{i,j+1} \}))\\
		&= \lambda_\P(d^0_1\cdots d^0_{j-1} (\tau^0_i,\tau^0_i\cup \{w^0_{i,1} < \dots < w^0_{i,j-1}< w^0_{i,j} \}))\\
		&= \cdots\\
		&= \lambda_\P(\tau^0_i,\tau^0_i\cup \{w^0_{i,j} \})\\
		&= w^0_{i,j}		
	\end{align*}
	and 
	\begin{align*}
		\MoveEqLeft{
		\lambda_\P(d^1_1\cdots d^1_{j-1}d^1_{j+1} \cdots 	d^1_{m-1}d^1_ix)}\\ &= \lambda_\P(d^1_1\cdots d^1_{j-1}d^1_{j+1} \cdots d^1_{m-1}(\tau^1_i,\tau^1_i\cup \{w^1_{i,1}< \dots < w^1_{i,m-1}\}))\\
		&= \lambda_\P(\tau^1_i\cup \{w^1_{i,1}< \dots < w^1_{i,j-1}< w^1_{i,j+1}< \cdots  w^1_{i,m-1}\},\\ &\quad \quad \quad \tau^1_i\cup \{w^1_{i,1}< \dots < w^1_{i,m-1}\})\\
		&= w^1_{i,j}.
	\end{align*}
	Since parallel edges in an HDA have the same label (see, e.g., \cite[Lemma 4.6]{transhda}, it follows that \(w^0_{i,j} = w^1_{i,j}\) in the case \(m \geq 3\) as well.

	Let \(1 \leq i < j \leq m\). Since 
	\begin{align*}
		d^0_id^0_jx &= d^0_i(\tau^0_j, \sigma^0_j) = (\tau^0_j, \sigma^0_j\setminus \{w^0_{j,i}\})
	\end{align*}
	and 
	\begin{align*}
		d^0_{j-1}d^0_ix &= d^0_{j-1}(\tau^0_i, \sigma^0_i) = (\tau^0_i, \sigma^0_i\setminus \{w^0_{i,j-1}\}),
	\end{align*}
	we have \(\tau^0_i = \tau^0_j\). Set \(\tau = \tau^0_i = \tau^0_j\). Since 
	\begin{align*}
		d^0_id^1_jx &= d^0_i(\tau^1_j, \sigma^1_j) = (\tau^1_j, \sigma^1_j\setminus \{w^1_{j,i}\})
	\end{align*}
	and 
	\begin{align*}
		d^1_{j-1}d^0_ix &= d^1_{j-1}(\tau^0_i, \sigma^0_i) = (\tau^0_i \cup \{w^0_{i,j-1}\}, \sigma^0_i),
	\end{align*}
	we have \(\tau^1_j = \tau \cup \{w^0_{i,j-1}\}\). Since 
	\begin{align*}
		d^1_id^0_jx &= d^1_i(\tau^0_j, \sigma^0_j) = (\tau^0_j \cup \{w^0_{j,i}\}, \sigma^0_j)
	\end{align*}
	and 
	\begin{align*}
		d^0_{j-1}d^1_ix &= d^0_{j-1}(\tau^1_i, \sigma^1_i) = (\tau^1_i, \sigma^1_i \setminus \{w^1_{i,j-1}\}),
	\end{align*}
	we have \(\tau^1_i = \tau \cup \{w^0_{j,i}\}\).
	
	Since \(\tau^1_j = \tau \cup \{w^0_{i,j-1}\}\) for all \(1\leq  i < j \leq m\), we have 
	\[w^0_{1,j-1} = w^0_{2,j-1} = \dots  = w^0_{j-1,j-1}\]
	for all \(1 < j \leq m\). Since \(\tau^1_i = \tau \cup \{w^0_{j,i}\}\) for all \(1\leq  i < j \leq m\), we have 
	\[w^0_{i+1,i} = w^0_{i+2,i} = \dots  = w^0_{m,i}\]
	for all \(1\leq i < m\). Since  \(\tau \cup \{w^0_{i+1,i}\} = \tau^1_i =  \tau \cup \{w^0_{i-1,i-1}\}\) for all \(1 < i < m\), we have 
	\begin{align*}
		w^0_{1,i-1} = w^0_{2,i-1} = \dots  = w^0_{i-1,i-1} = w^0_{i+1,i} = w^0_{i+2,i} = \dots  = w^0_{m,i}
	\end{align*} 
	for all \(1 < i < m\).
 
Set 
\[w_i = \begin{cases}
	w^0_{i+1,i}, & 1 \leq i < m,\\
	w^0_{1, m-1}, & i = m.
\end{cases}
\]
Then \(w_1 < \dots < w_m\). Indeed, if \(m = 2\), since \(\Q\) satisfies HM1 and HM2,
\begin{align*}
	w_1 &= w^0_{2,1} = \lambda_\P(\tau^0_2, \tau^0_2 \cup \{w^0_{2,1}\}) = \lambda_\P(d^0_2x) =  \lambda_\Q(d^0_2x)\\
	&< \lambda_\Q(d^0_1x) = \lambda_\P(d^0_1x) = \lambda_\P(\tau^0_1, \tau^0_1 \cup \{w^0_{1,1}\}) = w^0_{1,1} = w_2.
\end{align*}
If \(m\geq 3\), we have 
\[w_{m-1} = w^0_{m,m-1} = w^0_{m-2,m-2} < w^0_{m-2,m-1} = w^0_{1,m-1} = w_m\]
and
\[w_i = w^0_{i+1, i} = w^0_{i+2,i} < w^0_{i+2,i+1} = w_{i+1}\]
for \(1\leq i  < m-1\).

We have 
\begin{align*}
	\MoveEqLeft{d^0_m(\tau, \tau \cup \{w_1 < \dots  < w_m\})}\\ &= (\tau, \tau \cup \{w_1 <  \dots < w_{m-1}\})\\
	&= (\tau, \tau \cup \{w^0_{2,1} < \dots <  w^0_{m, m-1}\})\\
	&= (\tau, \tau \cup \{w^0_{m,1} < \dots <  w^0_{m, m-1}\})\\
	&= (\tau^0_m, \sigma^0_m)\\
	&= d^0_mx
\end{align*}
and 
\begin{align*}
	\MoveEqLeft{d^1_m(\tau, \tau \cup \{w_1 < \dots  < w_m\})}\\ &= (\tau \cup \{w_m\}, \tau \cup \{w_1 <  \dots < w_{m}\})\\
	&= (\tau \cup \{w_m\}, \tau \cup \{w_m\} \cup \{w_1 <  \dots < w_{m-1}\})\\
	&= (\tau \cup \{w^0_{1,m-1}\}, \tau \cup \{w^0_{1,m-1}\} \cup \{w^0_{2,1} <  \dots < w^0_{m,m-1}\})\\
	&= (\tau^1_m, \tau^1_m \cup \{w^0_{m,1} <  \dots < w^0_{m,m-1}\})\\
	&= (\tau^1_m, \tau^1_m \cup \{w^1_{m,1} <  \dots < w^1_{m,m-1}\})\\
	&= (\tau^1_m, \sigma^1_m)\\
	&= d^1_mx.
\end{align*}
For \(1 \leq i < m\), we have
\begin{align*}
	\MoveEqLeft{d^0_i(\tau, \tau \cup \{w_1 < \dots < w_m\})}\\ &= (\tau, \tau \cup \{w_1 < \dots w_{i-1} < w_{i+1} < \dots < w_{m-1} <  w_m\})\\
	&= (\tau, \tau \cup \{w^0_{2,1} < \dots < w^0_{i,i-1} < w^0_{i+2,i+1} < \dots <  w^0_{m, m-1} < w^0_{1,m-1}\})\\
	&= (\tau, \tau \cup \{w^0_{i,1} < \dots < w^0_{i,i-1} < w^0_{i,i} < \dots <  w^0_{i, m-2} < w^0_{i,m-1}\})\\
	&= (\tau^0_i, \sigma^0_i)\\
	&= d^0_ix
\end{align*}
and 
\begin{align*}
	\MoveEqLeft{d^1_i(\tau, \tau \cup \{w_1 < \dots < w_m\})}\\ &= (\tau \cup \{w_i\}, \tau \cup \{w_1 < \dots <  w_m\})\\
	&= (\tau \cup \{w^0_{i+1,i}\}, \tau \cup \{w^0_{2,1} <  \dots <  w^0_{m, m-1} < w^0_{1,m-1}\})\\
	&= (\tau \cup \{w^0_{i+1,i}\}, \tau \cup \{w^0_{i+1,i}\} \\
	&\quad \quad \quad  \cup \{w^0_{2,1} < \dots < w^0_{i,i-1} < w^0_{i+2,i+1} < \dots <  w^0_{m, m-1} < w^0_{1,m-1}\})\\
	&= (\tau \cup \{w^0_{i+1,i}\}, \tau \cup \{w^0_{i+1,i}\} \\
	&\quad \quad \quad 	\cup \{w^0_{i,1} < \dots < w^0_{i,i-1} < w^0_{i,i} < \dots <  w^0_{i, m-2} < w^0_{i,m-1}\})\\
	&= (\tau^1_i, \tau^1_i\cup \{w^1_{i,1} < \dots < w^1_{i,i-1} < w^1_{i,i} < \dots <  w^1_{i, m-2} < w^1_{i,m-1}\})\\
	&= (\tau^1_i, \sigma^1_i)\\
	&= d^1_ix.
\end{align*}
Since \(\Q\) satisfies HM3, it follows that \(x = (\tau, \tau \cup \{w_1 < \dots < w_m\}) \in P_m\). Thus, \((P_\Q)_m = P_m\).
\end{proof}

\section{Accessibility} \label{SecAcc}

A state \(v\) in HDA is said to be \emph{reachable} if there exists a path, i.e., a sequence of consecutive edges, from the initial state to \(v\). An HDA in which all states are reachable is called \emph{accessible}. Unreachable states are of very limited interest for the analysis of concurrent systems, since the executions of a system only pass through reachable states. Therefore, it makes sense to model only the accessible part of the  state space of a system. Another important reason for doing so is the state explosion problem: the state space of a concurrent system can easily become very large, and including unreachable states in the model would dramatically aggravate this problem. Unfortunately, the HDA \(\P\) defined in the previous section is highly inaccessible. In this section, we show that it is possible to modify \(\P\) to obtain an accessible HDA of the same homotopy type. More precisely, we prove the following theorem:

\begin{theor} \label{thmacc}
	Let \(\A\) be a bideterministic HDA which is the HDA model of its 1-skeleton with respect to a strict total order on \(\Sigma_\A\). Suppose that \(\A\) is connected, i.e., \(|P_\A|\) is path-connected, and that \(\A\) has only a finite number of unreachable states (e.g., \(\A\) is finite). Then there exists an accessible and bideterministic HDA \(\B\) which is the HDA model of its 1\(\mbox{-}\)skeleton with respect to a strict total order on \(\Sigma_\B\) and satisfies \({|P_\B| \simeq |P_\A|}\).
\end{theor}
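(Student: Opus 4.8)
The plan is to argue by induction on the number of unreachable states of \(\A\). If \(\A\) is already accessible, we take \(\B=\A\). Otherwise let \(U\neq\emptyset\) be the (finite, proper) set of unreachable states. Since \(|P_\A|\) is path-connected, its \(1\)-skeleton is a connected graph, so there is an edge \(e\) joining a state \(v\in U\) to a reachable state; because \(v\) is unreachable, \(e\) necessarily runs \emph{from} \(v\). The aim is to ``turn \(e\) around'': replace it by a directed path that ends at \(v\) and issues from a reachable vertex, so that \(v\), and hence everything reachable from \(v\), becomes reachable. Since \(e\), traversed from \(v\), lies on no directed path out of \(I_\A\) --- such a path would have to reach \(v\) first --- turning \(e\) around destroys no existing reachability, so the modified HDA has strictly fewer unreachable states, and the induction closes.

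Concretely, I would reorient \(e\) by a local cubical subdivision. Because \(e\) may be a face of higher-dimensional cubes, one cannot reorient \(e\) in isolation: one subdivides \(e\) together with its entire parallelism class and with every cube having an edge of that class as a face. On geometric realizations this is a subdivision, so \(|P_\B|\approx|P_\A|\), in particular \(|P_\B|\simeq|P_\A|\). Each subdivided edge is cut at a new midpoint vertex, and the two halves of all edges in the class are oriented coherently --- so that every subdivided cube is again a genuine precubical cube --- in such a way that the two halves of \(e\) form the desired directed path into \(v\). Assigning fresh labels to the newly created classes of half-edges keeps the HDA bideterministic, because then no state acquires two incoming, or two outgoing, edges with a common label. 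Extending the strict total order on \(\Sigma_\A\) by inserting the new labels in suitable positions makes every new cube satisfy HM2; HM1 and HM3 are clear. A verification of HM4 --- that no further empty cube of the new \(1\)-skeleton is fillable --- then shows the resulting HDA to be the HDA model of its \(1\)-skeleton with respect to the extended order (an argument in the spirit of the proof of Theorem \ref{PHDAmod}), and the inductive hypothesis applies to it.

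I expect the main difficulty to be twofold, both aspects concerning what happens away from the \(1\)-skeleton. First, one must make sure the surgery does not \emph{enlarge} the HDA: after reorientation, no empty cube of the new \(1\)-skeleton may become fillable that was not already present (condition HM4), for otherwise ``filling it in'' would alter the homotopy type; it is precisely the coherent use of fresh labels for the new edge classes, together with careful placement of these labels in the extended order, that must be arranged to rule this out. Second, and more delicate, turning \(e\) around also turns around the rest of its parallelism class, and this need not by itself reduce the number of unreachable states; the family of edges reoriented at each step therefore has to be chosen with care --- in effect one reorients a suitable spanning subforest of the part of \(\A\) sitting over the unreachable region, so that the reachable part strictly grows --- and making such a choice while keeping bideterminism and HM4 intact is the crux of the proof.
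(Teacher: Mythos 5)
Your overall strategy -- induct on the number of unreachable states, find an edge \(e\) from an unreachable state \(v\) to a reachable state \(w\) (your connectivity argument for its existence is the same as the paper's), and make \(v\) reachable while preserving the homotopy type, bideterminism, and the HDA-model property -- matches the paper's skeleton. But the surgery you propose has a genuine gap, and it is exactly the one you flag as ``the crux.'' Reorienting \(e\) forces you to reorient its whole parallelism class: within any 2-cube crossed by your subdivision, the two parallel half-edges must be directed the same way for the two sub-squares to be precubical cubes at all, so the flip propagates through every chain of squares containing an edge parallel to \(e\). That class is determined by the 2-cubes of \(\A\); it is not something you get to choose, so ``reorienting a suitable spanning subforest of the part of \(\A\) sitting over the unreachable region'' is not an available move -- either you flip an entire class or none of it. Since the class can contain edges between reachable states whose reversal disconnects other states from \(I_\A\), your argument that no existing reachability is destroyed (which is valid for \(e\) alone, since no directed path from \(I_\A\) can traverse \(e\)) does not extend, and the induction need not close. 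The HM4 verification and the consistency of placing the fresh labels in the order (the same fresh label serves many sub-squares, flipped and unflipped, which may impose conflicting inequalities via HM2) are also left open, so the proposal as written does not constitute a proof.

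The paper sidesteps all of this by never modifying the cubes of \(\A\): it only \emph{adds} material. If \(w=I_\A\) one simply moves the initial state to \(v\). Otherwise one takes a simple path \(\omega\) from \(I_\A\) to \(w\), glues on a contractible strip \(\lbrbrak -1,0\rbrbrak\otimes\{1\}\cup\lbrbrak 0,k\rbrbrak\otimes\lbrbrak 1,2\rbrbrak\) along \(\omega\) with a fresh label \(c\) and a new initial state (the HDA \(\C\)), and then glues on a second contractible strip along part of that copy and the edge \(e\) (the HDA \(\D\)), whose bottom row is a directed path from a reachable vertex into \(v\). Both attachments are pushouts along contractible subcomplexes, so \(|P_\D|\simeq|P_\A|\); bideterminism is checked directly, and the number of unreachable states strictly drops. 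The only orientation issue arises in the newly added squares, where HM2 may fail, and it is repaired by swapping the face maps \(d^k_1\leftrightarrow d^k_2\) of exactly those new 2-cubes whose edge labels are related the wrong way -- a flip that is harmless because it involves only new cells, leaves the 1-skeleton unchanged, and preserves the realization. The HDA-model property (including HM4) is then verified for \(\C\) and for the final \(\B\). If you want to salvage your approach, you would have to confine the reorientation to cells disjoint from the reachable dynamics, which is essentially what the paper's construction of fresh strips achieves by design.
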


For the proof, we may suppose that \(\A\) is not accessible. Clearly, it is enough to show that there exists a bideterministic HDA \(\B\) with less unreachable states than \(\A\) that is the HDA model of its 1-skeleton with respect to a strict total order on \(\Sigma_\B\) and satisfies \({|P_\B| \simeq |P_\A|}\). We show first that \(\A\) admits an edge from an unreachable to a reachable state. Suppose that there is no such edge. Let \(v\) be an unreachable state. Since \(\A\) is connected, there is a sequence of vertices \(I_\A = v_1, v_2, \dots, v_k = v\) such that for each \(1 \leq i < k\) there exists an edge between \(v_i\) and \(v_{i+1}\). Inductively, all \(v_i\) are reachable, which is impossible.

Let \(e\) be an edge of \(\A\) from an unreachable state \(v\) to a reachable state \(w\). If \(w = I_\A\), we define \(\B\) to be the same as \(\A\) but with \(I_\B = v\). Suppose that \(w \not= I_\A\). Let \(\lambda_\A(e) = a\), and let \(\omega = (x_1,\dots, x_k)\) be a path from \(I_\A\) to \(w\) with no repeated vertices, e.g., a shortest possible path. We view \(\omega\) as a morphism of precubical sets \(\lbrbrak 0, k \rbrbrak \to P_\A\), where the \emph{precubical interval} \(\lbrbrak p, q\rbrbrak\) (\(p,q \in \Z\), \(p\leq q\)) is the precubical set defined by \(\lbrbrak p,q \rbrbrak_0 = \{p,\dots , q\}\), \(\lbrbrak p,q \rbrbrak_1 =  \{{[p,p+1]}, \dots , {[q- 1,q]}\}\), \(d_1^0[j-1,j] = j-1\), \(d_1^1[j-1,j] = j\), and \(\lbrbrak p,q \rbrbrak_{n} = \emptyset\) for \(n > 1\).

\subsection*{The HDA \(\C\)}

We first extend \(\A\) to an HDA \(\C\) such that \({|P_\C| \simeq |P_\A|}\). We define the precubical set \(P_\C\) by the pushout diagram
\[
\begin{tikzcd}
	\lbrbrak 0, k\rbrbrak \otimes \{2\} \ar[r, "\cong"] \ar[d, hook] & \lbrbrak 0, k\rbrbrak  \ar[r, "\omega"]  & P_\A \ar[d, hook]\\
	\lbrbrak -1,0\rbrbrak \otimes \{1\} \cup \lbrbrak 0, k\rbrbrak \otimes \lbrbrak 1,2 \rbrbrak \ar[rr, "\xi"']
	& & P_\C.
\end{tikzcd}
\]
Since the geometric realizations of the precubical sets on the left are contractible and, as is well known, the geometric realization functor preserves colimits, the inclusion \(|P_\A| \hookrightarrow |P_\C|\) is a homotopy equivalence. Let \(\Sigma_\C = \Sigma_\A \cup \{c\}\) for some element \(c \notin \Sigma_\A\). We extend the labeling function of \(\A\) to \(\C\) by setting \(\lambda_\C(\xi(i,[1,2])) = c\) \((i \in\{0, \dots, k\})\),  \(\lambda_\C(\xi([i-1,i],1)) = \lambda_\A(x_i)\) \((i \in\{1, \dots, k\})\), and \(\lambda_\C(\xi([-1,0],1)) = a\). The initial state of \(\C\) is \({I_\C = \xi(-1,1)}\).
\begin{figure}[h!]
	\center
	\begin{tikzpicture}[initial text={},on grid] 
		
		\path[draw=lightgray, fill=lightgray] (-0.5,-0.5)--(0,0)--(2,0)--(1.5,-0.5)--cycle; 
		
		\path[draw=lightgray, fill=lightgray] (2.5,-0.5)--(3,0)--(7,0)--(6.5,-0.5)--cycle; 
		
		\path[draw=lightgray, fill=lightgray] (7.5,-0.5)--(8,0)--(10,0)--(9.5,-0.5)--cycle; 
		
		\node[state,minimum size=0pt,inner sep =2pt,fill=white] (q_0) at (0,0)  {}; 
		
		\node[state,minimum size=0pt,inner sep =2pt,fill=white] (q_1) [right=of q_0,xshift=0cm] {};
		
		\node[state,minimum size=0pt,inner sep =2pt,fill=white] (q_2)  at (4,0)  {};
		
		\node[state,minimum size=0pt,inner sep =2pt,fill=white] (q_3) [right=of q_2,xshift=0cm] {};
		
		\node[state,minimum size=0pt,inner sep =2pt,fill=white] (q_4) [right=of q_3,xshift=0cm] {};
		
		\node[state,minimum size=0pt,inner sep =2pt,fill=white] (q_5) at (9,0)  {}; 
		
		\node[state,minimum size=0pt,inner sep =2pt,fill=white] (q_6) [right=of q_5,xshift=0cm] {};

		\node[state,minimum size=0pt,inner sep =2pt,fill=white] (v) [below=of q_6,xshift=0cm] {};

		\node[state,minimum size=0pt,inner sep =2pt,fill=white, initial,initial where=left,initial distance=0.2cm] (p) at (-1.5,-0.5)  {}; 
		
		\node[state,minimum size=0pt,inner sep =2pt,fill=white] (p_0) at (-0.5,-0.5)  {}; 
		
		\node[state,minimum size=0pt,inner sep =2pt,fill=white] (p_1) [right=of p_0,xshift=0cm] {};
		
		\node[state,minimum size=0pt,inner sep =2pt,fill=white] (p_2)  at (3.5,-0.5)  {};
		
		\node[state,minimum size=0pt,inner sep =2pt,fill=white] (p_3) [right=of p_2,xshift=0cm] {};
		
		\node[state,minimum size=0pt,inner sep =2pt,fill=white] (p_4) [right=of p_3,xshift=0cm] {};
		
		\node[state,minimum size=0pt,inner sep =2pt,fill=white] (p_5) at (8.5,-0.5)  {}; 
		
		\node[state,minimum size=0pt,inner sep =2pt,fill=white] (p_6) [right=of p_5,xshift=0cm] {};

		\path[->] 
		(q_0) edge (q_1)
		(q_1) edge (2,0)
		(3,0) edge (q_2)
		(q_2) edge (q_3)
		(q_3) edge (q_4)
		(q_4) edge (7,0)
		(8,0) edge (q_5)
		(q_5) edge (q_6)
		(v)   edge[right] node{\scalebox{0.85}{\(a\)}} (q_6)
		
		(p)   edge (p_0)
		(p_0) edge (p_1)
		(p_0) edge[left] node{\scalebox{0.85}{\(c\)}} (q_0)
		(p_1) edge (1.5,-0.5)
		(p_1) edge (q_1)
		(2.5,-0.5) edge (p_2)
		(p_2) edge (p_3)
		(p_2) edge (q_2)
		(p_3) edge (p_4)
		(p_3) edge (q_3)
		(p_4) edge (6.5,-0.5)
		(p_4) edge (q_4)
		(7.5,-0.5) edge (p_5)
		(p_5) edge (p_6)
		(p_5) edge (q_5)
		(p_6) edge (q_6)
		;
		
		\node at (2.5,-0.025) {\scalebox{0.85}{\(\cdots\)}};
		\node at (2,-0.525) {\scalebox{0.85}{\(\cdots\)}};
		\node at (7.5,-0.025) {\scalebox{0.85}{\(\cdots\)}};
		\node at (7,-0.525) {\scalebox{0.85}{\(\cdots\)}};
		\node at (-1,-0.35) {\scalebox{0.85}{\(a\)}};
		\node at (-0.3,0.25) {\scalebox{0.85}{\(I_\A\)}};
		\node at (-1.6,-0.85) {\scalebox{0.85}{\(I_\C\)}};
		\node at (10.25,-1.2) {\scalebox{0.85}{\(v\)}};
		\node at (10.25, 0.2) {\scalebox{0.85}{\(w\)}};

	\end{tikzpicture}
	
\end{figure}

\begin{lem} \label{Cbidet}
	\(\C\) is  bideterministic and has the same unreachable states as \(\A\).
\end{lem}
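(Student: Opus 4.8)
The plan is to read an explicit description of the cells of $\C$ and their labels off the pushout, and then dispose of the three assertions by routine case analysis. First I would record that $\omega$ is a monomorphism: having no repeated vertices it is injective on $\lbrbrak 0,k\rbrbrak_0$, hence also on $\lbrbrak 0,k\rbrbrak_1$ since equal edges would have equal initial vertices. As precubical sets form a presheaf category, the square defining $P_\C$ is a pushout of a monomorphism along a monomorphism, so $P_\A \hookrightarrow P_\C$ is a monomorphism and the cells of $P_\C$ not lying in $P_\A$ are precisely the $\xi$-images of the cells of $\lbrbrak -1,0\rbrbrak \otimes\{1\}\cup\lbrbrak 0,k\rbrbrak\otimes\lbrbrak 1,2\rbrbrak$ outside $\lbrbrak 0,k\rbrbrak\otimes\{2\}$. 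Writing $p=\xi(-1,1)=I_\C$ and $p_i=\xi(i,1)$ for $0\le i\le k$, these new cells are: the vertices $p,p_0,\dots,p_k$, none of which lies in $P_\A$; the tail edge $\xi([-1,0],1)\colon p\to p_0$ with label $a$; the rail edges $\xi([i-1,i],1)\colon p_{i-1}\to p_i$ with label $\lambda_\A(x_i)$ $(1\le i\le k)$; the rung edges $\xi(i,[1,2])\colon p_i\to\omega(i)$ with label $c$ $(0\le i\le k)$; and the new squares $\xi([i-1,i],[1,2])$. Since $\omega$ is injective, $\xi(i,[1,2])$ is the unique rung with target $\omega(i)$, and $\omega(i)\in P_\A$.

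To see that $\C$ is deterministic I would run through the possible sources. Every edge of $\C$ whose source lies in $P_\A$ is an edge of $P_\A$, since the only new edges incident to a $P_\A$-vertex are rungs, which are directed into $P_\A$; hence two such edges with equal source and label coincide because $\A$ is deterministic. The vertex $p$ has the single outgoing edge $\xi([-1,0],1)$, the vertex $p_k$ the single outgoing edge $\xi(k,[1,2])$, and each $p_i$ with $0\le i<k$ exactly the two outgoing edges $\xi([i,i+1],1)$ and $\xi(i,[1,2])$, with labels $\lambda_\A(x_{i+1})\in\Sigma_\A$ and $c\notin\Sigma_\A$, which are distinct. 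Hence $\C$ is deterministic. Codeterminism is symmetric, via targets: an edge of $\C$ with target in $P_\A$ is either an edge of $P_\A$ or a rung; two $P_\A$-edges with equal target and label coincide because $\A$ is codeterministic; a rung has label $c\notin\Sigma_\A$, so it never conflicts with a $P_\A$-edge, and two rungs with a common target coincide by injectivity of $\omega$; finally $p$ has no incoming edge, $p_0$ has only the incoming edge $\xi([-1,0],1)$, and each $p_i$ with $1\le i\le k$ has only the incoming edge $\xi([i-1,i],1)$. Hence $\C$ is codeterministic.

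For the statement about unreachable states, I would first note that all new vertices are reachable in $\C$: the directed path $p\to p_0\to\cdots\to p_k$ along the tail and rail edges reaches every $p_i$, and following it by the rung $\xi(0,[1,2])$ reaches $I_\A=\omega(0)$; consequently every state reachable in $\A$ is reachable in $\C$. For the converse, let $u\in P_\A$ be reachable in $\C$ and fix a directed path from $I_\C=p$ to $u$. Since $p$ and all the $p_i$ lie outside $P_\A$ while $u\in P_\A$, and the only edges of $\C$ directed from $\{p,p_0,\dots,p_k\}$ into $P_\A$ are the rungs — every other new edge stays among the new vertices, and no edge of $\C$ has its source in $P_\A$ and its target among the new vertices — the path must leave the new part through some rung $\xi(i,[1,2])$, arriving at $\omega(i)$, and remain in $P_\A$ thereafter. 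As $\omega(i)$ is a vertex of the path $\omega$ from $I_\A$, it is reachable in $\A$, and the remaining portion of the path from $\omega(i)$ to $u$ lies in $P_\A$; hence $u$ is reachable in $\A$. Therefore $\C$ has exactly the same unreachable states as $\A$.

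The verification is largely bookkeeping; the one place that needs a moment of thought is the last point — that a directed path issuing from $I_\C$, once it crosses into $P_\A$ via a rung, can never return to the newly added vertices — and this is immediate from the direction in which the rung edges run. (That $\lambda_\C$ is a legitimate HDA labeling, i.e., agrees on opposite faces of the new squares, is the routine check $d^0_2\xi([i-1,i],[1,2])=\xi([i-1,i],1)$ and $d^1_2\xi([i-1,i],[1,2])=x_i$, both carrying the label $\lambda_\A(x_i)$, while $d^0_1$ and $d^1_1$ are rungs carrying the label $c$.)
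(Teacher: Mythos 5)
Your proof is correct and follows essentially the same route as the paper's: determinism via the observation that all new edges issue from new vertices with at most two distinct labels there, codeterminism via injectivity of \(\omega\) together with the label \(c\) on the rungs, and the reachability claim via the ladder path plus the fact that any path from \(I_\C\) into \(P_\A\) must cross a rung at some \(\omega(i)\). Your version merely makes explicit some bookkeeping (the pushout along a monomorphism, the enumeration of new cells) that the paper leaves implicit.
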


\begin{proof}
	Since all edges of \(\C\) that are not edges of \(\A\) start in vertices of \(\C\) that are not vertices of \(\A\) and in no such vertex start two edges with the same label, \(\C\) is deterministic. Since \(\omega\) has no repeated vertices, no two edges of \(\C\) that are not edges of \(\A\) end in the same vertex. Since any such edge that ends in a vertex of \(\A\) has label \(c\), it follows that \(\C\) is codeterministic. 
	
	Since \(I_\A\) is reachable in \(\C\), all states of \(\A\) that are reachable in \(\A\) are also reachable in \(\C\). On the other hand, since any path in \(\C\) from \(I_\C\) to a state of \(\A\) intersects \(\omega\), all states of \(\A\) that are reachable in \(\C\) are also reachable in \(\A\). Since all states in \(\xi(\lbrbrak -1,0\rbrbrak \otimes \{1\} \cup \lbrbrak 0, k\rbrbrak \otimes \lbrbrak 1,2 \rbrbrak)\) are reachable in \(\C\), it follows that \(\C\) has the same unreachable states as \(\A\).
\end{proof}

Let \(<\) be the strict total order on \(\Sigma_\A\) with respect to which \(\A\) is the HDA model of its 1-skeleton. We extend \(< \) to a strict total order on \(\Sigma_\C\) by setting \(b< c\) for all \(b\in \Sigma_\A\).  

\begin{lem} \label{CHM}
\(\C\) is the HDA model of \(\C_{\leq 1}\) with respect to \(<\).	
\end{lem}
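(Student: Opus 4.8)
Proof proposal for Lemma \ref{CHM}.

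The plan is to verify the four conditions HM1--HM4 for \(\C\) relative to the transition system \(\C_{\leq 1}\) (which is indeed a transition system: it is extensional because \(\C\) is bideterministic by Lemma \ref{Cbidet}) and the strict total order \(<\) just defined on \(\Sigma_\C\). I would note first that the canonical map \(\iota\colon P_\A\hookrightarrow P_\C\) and the map \(\xi\) are both injective---the former because the pushout is taken along the left-hand inclusion, the latter because \(\omega\) has no repeated vertices---that \(\iota\) restricts to a bijection onto \((P_\C)_n\) for \(n\geq 3\), and that the cells of \(\C\) not in \(\iota(P_\A)\) are exactly the edges \(\xi([-1,0],1)\), \(\xi([j,j+1],1)\), \(\xi(j,[1,2])\) and the squares \(\xi([j,j+1],[1,2])\) of the attached ladder. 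HM1 is immediate. For HM2, a \(2\)-cube of \(\C\) is either \(\iota(x)\) for a \(2\)-cube \(x\) of \(\A\), where HM2 follows from HM2 for \(\A\) together with \(\lambda_\C\circ\iota=\lambda_\A\), or one of the new squares, for which \(\lambda_\C(d^0_2\xi([j,j+1],[1,2]))=\lambda_\A(x_{j+1})<c=\lambda_\C(d^0_1\xi([j,j+1],[1,2]))\) since \(c\) is the top element of \(\Sigma_\C\). For HM3, cubes of dimension \(\geq 3\) all lie in \(\iota(P_\A)\), so the claim follows from HM3 for \(\A\) and injectivity of \(\iota\); among \(2\)-cubes, two squares of \(\A\) are handled by HM3 for \(\A\), two distinct new squares have distinct \(d^0_2\)-faces (the ``bottom'' edges of the ladder), and a new square and a square of \(\A\) cannot agree on all faces because the \(d^0_2\)-face of the former is not an edge of \(\A\).

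The substance is HM4. I would take an HDA \(\Q\) satisfying HM1--HM3 relative to \(\C_{\leq 1}\) and \(<\) with \(P_\C\subseteq P_\Q\) and prove \((P_\Q)_m=(P_\C)_m\) by induction on \(m\), the cases \(m\leq 1\) being HM1. For \(m=2\), let \(z\) be a square of \(\Q\); its edges \(d^0_2z,d^0_1z,d^1_1z,d^1_2z\) form a ``fillable square'' of \(\C_{\leq 1}\), i.e., two coinitial edges with strictly ordered labels (HM2 for \(\Q\)) followed by two cofinal edges carrying the same two labels (the HDA label axiom, and \cite[Lemma 4.6]{transhda} on parallel edges). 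The vertices of \(\C_{\leq 1}\) of out-degree \(\geq 2\) are exactly the vertices of \(\A\) together with the ladder vertices \(\xi(j,1)\), \(0\leq j\leq k-1\). If the common source of the two coinitial edges is a vertex of \(\A\), then all four edges of \(z\) live in \(\A\), and \(\A\) contains a \(2\)-cube with these faces (otherwise adjoining \(z\) to \(\A\) would give an HDA satisfying HM1--HM3 strictly larger than \(\A\), contradicting HM4 for \(\A\)); by HM3 for \(\Q\), the image of that \(2\)-cube in \(\Q\) equals \(z\). If the common source is \(\xi(j,1)\), the coinitial edges must be \(\xi([j,j+1],1)\) and \(\xi(j,[1,2])\) in this order; then determinism of \(\C_{\leq 1}\) forces \(d^1_1z=\xi(j+1,[1,2])\) and extensionality of \(\A\) (valid since \(\A\) is bideterministic) forces \(d^1_2z=x_{j+1}\), and these are exactly the faces of the new square \(\xi([j,j+1],[1,2])\), so \(z\) equals it by HM3 for \(\Q\). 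Hence \((P_\Q)_2=(P_\C)_2\).

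For the step \(m\geq 3\), assume the claim for smaller indices and take \(z\in(P_\Q)_m\). Since HM2 makes the labels of the \(m\) edges of \(z\) at its source corner pairwise distinct, these edges are distinct, so the source corner of \(z\) has out-degree \(\geq 3\) in \(\C_{\leq 1}\); as the only vertices of \(\C_{\leq 1}\) of out-degree \(\geq 3\) are vertices of \(\A\), the source corner, and therefore all of its out-edges---in particular the source edges of \(z\)---lie in \(\A\). I claim every codimension-\(1\) face of \(z\) lies in \(\iota(P_\A)\). For \(m\geq 4\) this is the inductive hypothesis, since \((P_\C)_{m-1}=\iota((P_\A)_{m-1})\); for \(m=3\) one must exclude that some \(d^\epsilon_iz\) is a new square \(\xi([j,j+1],[1,2])\). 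If \(\epsilon=0\), the source corner of that square, \(\xi(j,1)\), would be the source corner of \(z\), contradicting that the latter lies in \(\A\). If \(\epsilon=1\), the opposite face \(d^0_iz\) would, by \cite[Lemma 4.6]{transhda} applied to corresponding edges, have a coinitial edge labelled \(c\); but that edge is among the source edges of \(z\), which lie in \(\A\) and hence are not labelled \(c\). Thus all faces of \(z\) lie in \(\iota(P_\A)\), so \(P_\A\cup\{z\}\) is a precubical subset of \(P_\Q\); if no \(m\)-cube of \(\A\) agrees with \(z\) on all faces, adjoining \(z\) to \(\A\) yields an HDA satisfying HM1--HM3 over \(\A_{\leq 1}\) and \(<\) strictly larger than \(\A\), contradicting HM4 for \(\A\); hence such a cube exists and, by HM3 for \(\Q\), equals \(z\), so \(z\in\iota(P_\A)\). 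This closes the induction, so \(P_\Q=P_\C\); since \(\Q\) and \(\C\) have the same \(1\)-skeleton \(\C_{\leq 1}\), they coincide. The step I expect to be the main obstacle is the dimension-\(\geq 3\) argument: correctly identifying the degree constraint on the source corner and carefully keeping \(d^0\)- and \(d^1\)-faces apart, together with the routine but lengthy check that the pushout defining \(P_\C\) has precisely the cells listed above.
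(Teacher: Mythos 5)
Your proof is correct and follows essentially the same route as the paper's: direct verification of HM1--HM3, and for HM4 a bottom-corner analysis of cubes of \(\Q\) using determinism, HM3, the maximality of \(\A\), and the pairwise-distinct labels of source edges to handle dimensions \(\geq 3\). The only differences are expository: you prove explicitly (by adjoining a single cube and contradicting HM4 for \(\A\)) the fact the paper simply asserts, namely that cubes of \(\Q\) with all faces in \(\A\) belong to \(\A\), and for \(3\)-cubes you exclude the new squares as faces via the parallel-edge label lemma, where the paper instead derives a contradiction by showing the bottom vertex would lie outside \(\A\) and counting its outgoing edges.
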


\begin{proof}
	Condition HM1 is trivially satisfied. Since \(\A\) satisfies HM2 and, for all \(i \in \{1,\dots, k\}\), 
	\begin{align*}
		\lambda_\C(d^0_2\xi([i-1,i],[1,2])) &=  \lambda_\C(\xi(d^0_2([i-1,i],[1,2])))\\
		&= \lambda_\C(\xi([i-1,i],1))\\
		&= \lambda_\A(x_i)\\
		&<  c\\
		&= \lambda_\C(\xi(i-1,[1,2]))\\
		&= \lambda_\C(\xi(d^0_1([i-1,i],[1,2])))\\
		&= \lambda_\C(d^0_1\xi([i-1,i],[1,2])),		
	\end{align*}
	\(\C\) satisfies HM2. Let $x,y \in (P_\C)_m$ \((m\geq 2)\) such that $d^k_rx = d^k_ry$ for all $r \in\{1,\dots ,m\}$ and $k \in \{0,1\}$. Since \(\A\) satisfies HM3, \(x = y\) if \(x, y \in (P_\A)_m\). If \(x \notin (P_\A)_m\), then \(m = 2\) and \(x = \xi([i-1,i],[1,2])\) for some \(i \in \{1, \dots, k\}\). Since \(x\) is the only 2-cube of \(\C\) having \(\xi([i-1,i],1)\) in its boundary, \(y = x\). Hence \(\C\) satisfies HM3. Let \(\Q\) be an HDA with \(P_\C \subseteq P_\Q\) that satisfies HM1--HM3. Since \(\A\) is the HDA model of \(\A_{\leq 1}\), all \(m\)-cubes of \(\Q\) \((m\geq 2)\)  with faces in \(\A\) belong to \(\A\). Let \(x \in (P_\Q)_2\) such that at least one edge of \(x\) does not belong to \(\A\). Since every edge that starts in a vertex of \(\A\) belongs to \(\A\), \(d^0_1d^0_2x\) is not a vertex of \(\A\). Hence \(d^0_1d^0_2x = \xi(i,1)\) for some \(i \in\{-1,\dots, k\}\). Since \(\lambda_\Q(d^0_2x)<  \lambda_\Q(d^0_1x)\), we have \(d^0_1x \not= d^0_2x\) and therefore \(0 \leq i < k\). Since \(\lambda_\Q(\xi([i,i+1],1)) < \lambda_\Q (\xi(i,[1,2])) = c\), we have \(d^0_1x = \xi(i,[1,2])\) and \(d^0_2x = \xi([i,i+1],1)\). Since \(\lambda_\Q(d^1_1x) = \lambda_\Q(d^0_1x) = c\), we have \(d^1_1x = \xi(i+1,[1,2])\). Since, by Lemma \ref{Cbidet}, \(\Q_{\leq 1} = \C_{\leq 1}\) is deterministic, \(d^1_2x = x_{i+1}\) because \(d^1_2x\) starts in \(\xi(i,2) = d^0_1x_{i+1}\) and \(\lambda_\Q(d^1_2x) = \lambda_\Q(d^0_2x) = \lambda_\Q(\xi([i,i+1],1)) = \lambda_\Q(x_{i+1})\). Since \(\Q\) satisfies HM3, \(x = \xi([i,i+1],[1,2]) \in (P_\C)_2\). Suppose that there exist an integer \(m\geq 3\) and an element \(y \in (P_\Q)_m\) such that at least one face of \(y\) does not belong to \(\A\). Then \(d^0_1\cdots d^0_my\) is not a vertex of \(\A\). Hence \(d^0_1\cdots d^0_my = \xi(i,1)\) for some \(i\). Since, by \cite[Prop. 4.7]{transhda}, for all \(1 \leq i < j \leq m\), 
	\[\lambda_\Q(d^0_1\cdots d^0_{i-1}d^0_{i+1}\cdots d^0_mx) < \lambda_\Q(d^0_1\cdots d^0_{j-1}d^0_{j+1}\cdots d^0_mx),\]
	\(m\) different edges start in \(\xi(i,1)\). This is not the case. Thus, \(\Q = \C\) and \(\C\) satisfies HM4.
\end{proof}

\subsection*{The HDA \(\D\)}

\begin{sloppypar}
We now extend \(\C\) to an HDA \(\D\) that still satisfies \({|P_\D| \simeq |P_\A|}\) but in which \(v\) is reachable. Let \(i\) be the largest index in \(\{0, \dots, k\}\) such that \({\lambda_\C(\xi([i-1,i],1)) = a}\). Since \(\A\) is codeterministic, \(\lambda_\C(\xi([k-1,k],1)) = \lambda_\A(x_k) \not= a\). Hence \(i < k\). We define the precubical set \(P_\D\) by the pushout diagram
\[\begin{tikzcd}
	\lbrbrak i, k+1\rbrbrak \otimes \{1\} \cup \{i,k+1\}\otimes \lbrbrak 0,1\rbrbrak \ar[r, "\nu"] \ar[d, hook]  & P_\C \ar[d, hook]\\
	\lbrbrak i, k+1\rbrbrak \otimes \lbrbrak 0,1 \rbrbrak \ar[r,"\chi"']
	&  P_\D
\end{tikzcd}
\]
where \(\nu\) is the unique morphism of precubical sets such that \(\nu([j-1,j],1) = \xi([j-1,j],1)\) \((i < j \leq k)\), \(\nu([k,k+1],1) = \xi(k,[1,2])\), \(\nu(i,[0,1]) = {\xi([i-1,i],1)}\), and \(\nu(k+1,[0,1]) = e\). Note that \(\nu\) is injective. Since the geometric realizations of the precubical sets on the left are contractible, the inclusion \(|P_\C| \hookrightarrow |P_\D|\) is a homotopy equivalence. Hence \(|P_\D| \simeq |P_\A|\). We set \(I_\D = I_\C\) and \(\Sigma_\D = \Sigma_\C\) and extend the labeling function of \(\C\) to \(\D\) by setting \(\lambda_\D(\chi(j,[0,1])) = a\), \({\lambda_\D(\chi([j-1,j],0))} = \lambda_\A(x_j)\) \((i < j \leq k)\),  and \(\lambda_\D(\chi([k,k+1],0)) = c\). 
\begin{figure}[h!]
	\center
	\begin{tikzpicture}[initial text={},on grid] 
		
		\path[draw=lightgray, fill=lightgray] (-0.5,-0.5)--(0,0)--(2,0)--(1.5,-0.5)--cycle; 
		
		\path[draw=lightgray, fill=lightgray] (2.5,-0.5)--(3,0)--(7,0)--(6.5,-0.5)--cycle; 
		
		\path[draw=lightgray, fill=lightgray] (7.5,-0.5)--(8,0)--(10,0)--(9.5,-0.5)--cycle;
		
		\path[draw=lightgray, fill=lightgray] (3.5,-0.5)--(6.5,-0.5)--(6.5,-1.5)--(5.5,-1.5)--cycle;
		
		\path[draw=lightgray, fill=lightgray] (7.5,-0.5)--(9.5,-0.5)--(9.5,-1.5)--(7.5,-1.5)--cycle;
		
		\path[draw=lightgray, fill=lightgray] (9.5,-0.5)--(10,0)--(10,-1)--(9.5,-1.5)--cycle;

		\node[state,minimum size=0pt,inner sep =2pt,fill=white] (q_0) at (0,0)  {}; 
		
		\node[state,minimum size=0pt,inner sep =2pt,fill=white] (q_1) [right=of q_0,xshift=0cm] {};
		
		\node[state,minimum size=0pt,inner sep =2pt,fill=white] (q_2)  at (4,0)  {};
		
		\node[state,minimum size=0pt,inner sep =2pt,fill=white] (q_3) [right=of q_2,xshift=0cm] {};
		
		\node[state,minimum size=0pt,inner sep =2pt,fill=white] (q_4) [right=of q_3,xshift=0cm] {};
		
		\node[state,minimum size=0pt,inner sep =2pt,fill=white] (q_5) at (9,0)  {}; 
		
		\node[state,minimum size=0pt,inner sep =2pt,fill=white] (q_6) [right=of q_5,xshift=0cm] {};

		\node[state,minimum size=0pt,inner sep =2pt,fill=white] (v) [below=of q_6,xshift=0cm] {};

		\node[state,minimum size=0pt,inner sep =2pt,fill=white, initial,initial where=left,initial distance=0.2cm] (p) at (-1.5,-0.5)  {}; 
		
		\node[state,minimum size=0pt,inner sep =2pt,fill=white] (p_0) at (-0.5,-0.5)  {}; 
		
		\node[state,minimum size=0pt,inner sep =2pt,fill=white] (p_1) [right=of p_0,xshift=0cm] {};
		
		\node[state,minimum size=0pt,inner sep =2pt,fill=white] (p_2)  at (3.5,-0.5)  {};
		
		\node[state,minimum size=0pt,inner sep =2pt,fill=white] (p_3) [right=of p_2,xshift=0cm] {};
		
		\node[state,minimum size=0pt,inner sep =2pt,fill=white] (p_4) [right=of p_3,xshift=0cm] {};
		
		\node[state,minimum size=0pt,inner sep =2pt,fill=white] (p_5) at (8.5,-0.5)  {}; 
		
		\node[state,minimum size=0pt,inner sep =2pt,fill=white] (p_6) [right=of p_5,xshift=0cm] {};

		\node[state,minimum size=0pt,inner sep =2pt,fill=white] (r_4) [below=of p_4,xshift=0cm]  {};
		
		\node[state,minimum size=0pt,inner sep =2pt,fill=white] (r_5) [below=of p_5,xshift=0cm]  {};
		
		\node[state,minimum size=0pt,inner sep =2pt,fill=white] (r_6) [below=of p_6,xshift=0cm]  {};
		
		\path[->] 
		(q_0) edge (q_1)
		(q_1) edge (2,0)
		(3,0) edge (q_2)
		(q_2) edge (q_3)
		(q_3) edge (q_4)
		(q_4) edge (7,0)
		(8,0) edge (q_5)
		(q_5) edge (q_6)
		(v)   edge[right] node{\scalebox{0.85}{\(a\)}} (q_6)
		
		(p)   edge (p_0)
		(p_0) edge (p_1)
		(p_0) edge[left] node{\scalebox{0.85}{\(c\)}} (q_0)
		(p_1) edge (1.5,-0.5)
		(p_1) edge (q_1)
		(2.5,-0.5) edge (p_2)
		(p_2) edge (p_3)
		(p_2) edge (q_2)
		(p_2) edge (r_4)
		(p_3) edge (p_4)
		(p_3) edge (q_3)
		(p_4) edge (6.5,-0.5)
		(p_4) edge (q_4)
		(7.5,-0.5) edge (p_5)
		(p_5) edge (p_6)
		(p_5) edge (q_5)
		(p_6) edge (q_6)
		
		(r_4) edge (6.5,-1.5)
		(r_4) edge (p_4)
		(7.5,-1.5) edge (r_5)
		(r_5) edge (r_6)
		(r_5) edge (p_5)
		(r_6) edge (p_6)
		(r_6) edge (v)
		;
		
		\node at (2.5,-0.025) {\scalebox{0.85}{\(\cdots\)}};
		\node at (2,-0.525) {\scalebox{0.85}{\(\cdots\)}};
		\node at (7.5,-0.025) {\scalebox{0.85}{\(\cdots\)}};
		\node at (7,-0.525) {\scalebox{0.85}{\(\cdots\)}};
		\node at (7,-1.525) {\scalebox{0.85}{\(\cdots\)}};
		\node at (-1,-0.35) {\scalebox{0.85}{\(a\)}};
		\node at (4.1,-0.35) {\scalebox{0.85}{\(a\)}};
		\node at (-0.3,0.25) {\scalebox{0.85}{\(I_\A\)}};
		\node at (-1.6,-0.85) {\scalebox{0.85}{\(I_\D = I_\C\)}};
		\node at (10.25,-1.2) {\scalebox{0.85}{\(v\)}};
		\node at (10.25, 0.2) {\scalebox{0.85}{\(w\)}};

	\end{tikzpicture}
	
\end{figure}
\end{sloppypar}

\begin{lem} \label{D}
	\(\D\) is bideterministic and has less unreachable states than \(\A\).
\end{lem}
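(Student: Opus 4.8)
The plan is to build on Lemma~\ref{Cbidet}, which tells us that $\C$ is already bideterministic and has the same unreachable states as $\A$, and to check that the cells added in passing from $\C$ to $\D$ neither destroy (co)determinism nor prevent the bookkeeping of unreachable states from strictly improving. Since $\nu$ is injective, the pushout square exhibits $P_\C$ as a precubical subset of $P_\D$, so the cells of $P_\D$ not coming from $\C$ are precisely: the ``bottom-row'' edges $\chi([j-1,j],0)$ for $i<j\le k+1$ (labelled $\lambda_\A(x_j)$ for $j\le k$ and $c$ for $j=k+1$), the vertical edges $\chi(j,[0,1])$ for $i<j\le k$ (labelled $a$), the vertices $\chi(j,0)$ for $i<j\le k$, and the squares $\chi([j-1,j],[0,1])$ for $i<j\le k+1$. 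One also records that $\chi(i,0)=\xi(i-1,1)$, $\chi(k+1,0)=v$, and $\chi(j,1)=\xi(j,1)$ for $i\le j\le k$.

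For determinism I would inspect the sources of the new edges. Every new edge starts either at a new vertex $\chi(j,0)$ with $i<j\le k$ or at $\chi(i,0)=\xi(i-1,1)$. From $\chi(i,0)$ the only new edge is $\chi([i,i+1],0)$, whose label $\lambda_\A(x_{i+1})$ differs from $a$ by the maximality of $i$ (note $i<k$, so $\xi([i,i+1],1)$ is among the edges considered) and differs from $c$ since it lies in $\Sigma_\A$; and all $\C$-edges out of $\xi(i-1,1)$ have labels in $\{a,c\}$. From a new vertex $\chi(j,0)$, $i<j\le k$, exactly two edges emanate, namely $\chi(j,[0,1])$ with label $a$ and $\chi([j,j+1],0)$ with label $\lambda_\A(x_{j+1})$ (or $c$ if $j=k$), and again $\lambda_\A(x_{j+1})\ne a$ by maximality of $i$ while $a\ne c$. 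Together with the determinism of $\C$, this gives that $\D$ is deterministic.

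Codeterminism is handled dually by inspecting targets. An edge into a new vertex $\chi(j,0)$, $i<j\le k$, is unique, since only $\chi([j-1,j],0)$ ends there. An edge into $\xi(j,1)$, $i<j\le k$: in $\C$ the only such edge is $\xi([j-1,j],1)$ with label $\lambda_\A(x_j)$, and the new one is $\chi(j,[0,1])$ with label $a$; since $i<j\le k$, maximality of $i$ gives $\lambda_\A(x_j)\ne a$. Finally, into $\chi(k+1,0)=v$ the new edge $\chi([k,k+1],0)$ has label $c\notin\Sigma_\A$, and since $v$ is unreachable it is not a vertex of $\omega$, so $\C$ adds no edge ending at $v$ and hence every $\C$-edge into $v$ is an edge of $\A$ with label in $\Sigma_\A$. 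Combined with the codeterminism of $\C$, this shows $\D$ is codeterministic, hence bideterministic.

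It remains to count unreachable states. Adding cells never destroys reachability, so every state reachable in $\C$ is reachable in $\D$, and by Lemma~\ref{Cbidet} the unreachable states of $\C$ are exactly those of $\A$. The vertex $\chi(i,0)=\xi(i-1,1)$ is already reachable in $\C$, via $I_\C=\xi(-1,1)\to\xi(0,1)\to\cdots\to\xi(i-1,1)$ along the edges $\xi([l-1,l],1)$; in $\D$ this path continues along the new edges $\chi([i,i+1],0),\chi([i+1,i+2],0),\dots,\chi([k,k+1],0)$, reaching successively $\chi(i+1,0),\dots,\chi(k,0)$ and finally $\chi(k+1,0)=v$. Hence every new vertex of $\D$ is reachable, and $v$, which is unreachable in $\A$, becomes reachable in $\D$. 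Thus the unreachable states of $\D$ form a proper subset of those of $\A$, so $\D$ has fewer unreachable states than $\A$. The one point requiring care throughout is keeping straight which cells and incident edges are genuinely new, together with the two places where a potential label clash is excluded exactly because $i$ is the \emph{largest} index with $\lambda_\C(\xi([i-1,i],1))=a$, and the observation at $v$ that an unreachable state cannot lie on $\omega$.
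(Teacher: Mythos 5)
Your proof is correct and follows essentially the same route as the paper's: determinism and codeterminism are checked by inspecting the sources and targets of the newly added edges (using the maximality of \(i\), the fact that \(i<k\), and that \(c\notin\Sigma_\A\)) on top of Lemma~\ref{Cbidet}, and the count of unreachable states drops because the new vertices and \(v\) are reached along the explicit path you describe. You merely spell out some details (the enumeration of new cells and the concrete path to \(v\)) that the paper leaves implicit.
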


\begin{proof}
	In each vertex of \(\D\) that is not a vertex of \(\C\) start exactly two edges, one with label \(a\) and the other with a different label. The only vertex of \(\C\) in which starts an edge of \(\D\) that is not an edge of \(\C\) is \(\nu(i,0) = \xi(i-1,1)\). By definition of \(i\), the label of this edge is different from \(a\). Since \(i < k\), this label is also different from \(c\). Hence the edges starting in \(\nu(i,0) = \xi(i-1,1)\) have different labels. Since, by Lemma \ref{Cbidet}, \(\C\) is deterministic, it follows that \(\D\) is deterministic. Since no two edges of \(\D\) that are not edges of \(\C\) end in the same vertex, no edge in \(\C\) with label \(c\) ends in \(v\), and the edges \({\xi ([i,i+1],1), \dots, \xi ([k-1,k],1)}\) have labels different from \(a\), \(\D\) is codeterministic. 
	
	All states of \(\C\) that are reachable in \(\C\) are also reachable in \(\D\). Since all states in \(\chi(\lbrbrak i, k+1\rbrbrak \otimes \lbrbrak 0,1 \rbrbrak)\) are reachable in \(\D\) and, in particular, \(v = \chi(k+1,0)\) is reachable in \(\D\), the number of unreachable states of \(\D\) is less than the number of unreachable states of \(\C\) and hence, by Lemma \ref{Cbidet}, of \(\A\).
\end{proof}

\subsection*{The HDA \(\B\)}
\begin{sloppypar}
Unfortunately, we cannot guarantee that \(\D\) is the HDA model of its \(\mbox{1-}\)skeleton, because the labels of the edges of the squares added to \(\C\) might be related in the wrong way. In the final HDA \(\B\), we solve this problem. We set \((P_\B)_m = (P_\D)_m\) for all \(m\) and define the face maps of \(P_\B\) by
\[\partial ^k_i x = \left \{ \begin{array}{ll}
	d^k_{3-i}x, & x \in (P_\D)_2 \setminus (P_\C)_2, \, \lambda_\D(d^0_1x) < \lambda_\D(d^0_2x),\\
	d^k_ix, & \mbox{else}.
\end{array}\right.\]
Then \(P_\B\) is a precubical set and \(|P_\B| \approx |P_\D|\). Hence \(|P_\B| \simeq |P_\A|\). We set \(I_\B = I_\D = I_\C\), \(\Sigma_\B = \Sigma_\D = \Sigma_\C\), and \(\lambda_\B = \lambda_\D\). Then \(\B\) is an HDA with \(\B_{\leq 1} = \D_{\leq 1}\). By Lemma \ref{D}, \(\B\) is bideterministic and has less unreachable states than \(\A\). To finish the proof of Theorem \ref{thmacc}, it remains to show that \(\B\) is the HDA model of its 1-skeleton. This is done in Proposition \ref{BHM} below. 
\end{sloppypar}
\begin{lem} \label{facelem}
	Let \(\Q\) be an HDA which satisfies HM2 with respect to \(\Q_{\leq 1}\) and a strict total order on \(\Sigma_\Q\), and let \(x\) be an element of \(P_\Q\) of degree \(m\geq 3\). Then \(d^k_px \not= d^l_qx\) for all \(1 \leq p < q \leq m\) and \(k, l \in \{0,1\}\).
\end{lem}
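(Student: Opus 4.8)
The plan is to keep track, for each cube $z$ of $\Q$ of degree $n\geq 1$ and each direction $i\in\{1,\dots,n\}$, of the label $\ell_i(z)\in\Sigma_\Q$ carried by the edges of $z$ running in direction $i$; formally, set $\ell_i(z)=\lambda_\Q\bigl(d^0_1\cdots d^0_{i-1}d^0_{i+1}\cdots d^0_n z\bigr)$, the label of the ``initial'' direction-$i$ edge of $z$. The whole argument rests on one bookkeeping identity: for all $p\in\{1,\dots,n\}$, $k\in\{0,1\}$ and $j\in\{1,\dots,n-1\}$,
\[
\ell_j(d^k_p z)=\ell_{\iota_p(j)}(z),\qquad \iota_p(j)=\begin{cases} j,& j<p,\\ j+1,& j\geq p,\end{cases}
\]
that is, replacing $z$ by the face $d^k_p z$ deletes the $p$-th entry of the tuple $(\ell_1(z),\dots,\ell_n(z))$, whatever $k$ is. For $k=0$ this is immediate from the precubical identities, because both sides are the same composite of initial ($d^0$) face maps applied to $z$. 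For $k=1$ one first uses the precubical identities to rewrite the initial direction-$j$ edge of $d^1_p z$ as the edge $\bar e$ of $z$ obtained by applying $d^1$ in direction $p$ and $d^0$ in every direction other than $p$ and $\iota_p(j)$; then $\bar e$ and the initial direction-$\iota_p(j)$ edge of $z$ are, respectively, the $d^1$- and the $d^0$-face in the appropriate direction of the single $2$-cube of $z$ spanned by directions $p$ and $\iota_p(j)$ (all remaining directions faced off with $d^0$), so they have the same label by the defining condition of an HDA.

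The second ingredient is that the tuple $(\ell_1(x),\dots,\ell_m(x))$ is strictly increasing for the given order on $\Sigma_\Q$. Indeed, for $1\leq i<m$ the $2$-cube $y_i$ of $x$ spanned by directions $i$ and $i+1$ (all other directions faced off with $d^0$) satisfies $\ell_i(x)=\lambda_\Q(d^0_2 y_i)$ and $\ell_{i+1}(x)=\lambda_\Q(d^0_1 y_i)$, so HM2 applied to $y_i$ gives $\ell_i(x)<\ell_{i+1}(x)$, and transitivity of $<$ finishes it; this is also a special case of \cite[Prop.~4.7]{transhda}. In particular the entries $\ell_1(x),\dots,\ell_m(x)$ are pairwise distinct.

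Now suppose, for contradiction, that $d^k_p x=d^l_q x$ for some $1\leq p<q\leq m$ and $k,l\in\{0,1\}$. Reading off the label tuples of these two equal $(m-1)$-cubes by means of the bookkeeping identity (applied with face index $p$ on the left and with $q$ on the right) shows that deleting the $p$-th entry and deleting the $q$-th entry from $(\ell_1(x),\dots,\ell_m(x))$ produce the same $(m-1)$-tuple. But $p\leq m-1$ (because $p<q\leq m$), and these two tuples already disagree in their $p$-th slot: there it equals $\ell_{p+1}(x)$ after deletion of the $p$-th entry but $\ell_p(x)$ after deletion of the $q$-th one, and $\ell_p(x)\neq\ell_{p+1}(x)$ by the previous paragraph. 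This contradiction proves the lemma.

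The delicate point is the bookkeeping identity, and especially the case $k=1$: one has to handle correctly the index shifts produced by the relations $d^k_id^l_j=d^l_{j-1}d^k_i$ when reordering composites of face maps, and one has to check that the two edges being compared genuinely are opposite faces of one $2$-cube of $z$, so that the HDA axiom applies. Once that identity is available, the remainder is a one-line comparison of tuples.
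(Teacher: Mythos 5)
Your proof is correct and takes essentially the same route as the paper: there, one applies \(d^0_1\cdots d^0_{p-1}d^0_{p+1}\cdots d^0_{m-1}\) to the supposed equality \(d^k_px=d^l_qx\) and invokes \cite[Lemma 4.6, Prop.\ 4.7]{transhda} to conclude that the label of the initial direction-\(p\) edge of \(x\) would both equal and be strictly smaller than that of the initial direction-\((p+1)\) edge, which is exactly the position-\(p\) discrepancy you extract from your label tuples. The only difference is presentational: you reprove the two cited ingredients (opposite edges of a 2-cube, hence parallel edges, carry the same label, and HM2 forces the direction labels \(\ell_1(x)<\dots<\ell_m(x)\) to be strictly increasing) directly, rather than quoting them.
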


\begin{proof}
    Suppose that \(d^k_px = d^l_qx\) for some \(1 \leq p < q \leq m\) and \(k, l \in \{0,1\}\). Then 
	\begin{align*}
		\MoveEqLeft{d^0_1 \cdots d^0_{p-1}d^k_pd^0_{p+2}\cdots d^0_{m}x = d^0_1 \cdots d^0_{p-1}d^0_{p+1}\cdots d^0_{m-1}d^k_px}\\ 
		&= d^0_1 \cdots d^0_{p-1}d^0_{p+1}\cdots d^0_{m-1}d^l_qx
		= d^0_1 \cdots d^0_{p-1}d^0_{p+1}\cdots d^0_{q-1}d^l_qd^0_{q+1}\cdots d^0_{m}x.
	\end{align*}
	By the arguments given in \cite[Lemma 4.6, Prop 4.7]{transhda}, it follows that 
	\begin{align*}
		\lambda_\Q(d^0_1 \cdots d^0_pd^0_{p+2}\cdots d^0_{m}x) &= \lambda_\Q(d^0_1 \cdots d^0_{p-1}d^0_{p+1}\cdots d^0_{m}x)\\
		&< \lambda_\Q(d^0_1 \cdots d^0_pd^0_{p+2}\cdots d^0_{m}x),
	\end{align*}
	which is impossible.
\end{proof}

\begin{prop} \label{BHM}
	\(\B\) is the HDA model of \(\B_{\leq 1}\) with respect to \(<\). 
\end{prop}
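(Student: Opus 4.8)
The plan is to verify conditions HM1--HM4 for $\B$ with respect to the transition system $\B_{\le 1} = \D_{\le 1}$ and the order $<$. HM1 is immediate, since passing from $\D$ to $\B$ only reshuffles face maps of $2$-cubes and leaves the $1$-skeleton unchanged. HM2 is precisely what the definition of the face maps $\partial^k_i$ was designed to give: for $x \in (P_\C)_2$ it follows from Lemma \ref{CHM} (recall $\lambda_\B = \lambda_\D$ extends $\lambda_\C$), and for $x \in (P_\D)_2 \setminus (P_\C)_2$ one notes that $d^0_1 x$ and $d^0_2 x$ are distinct edges --- $\chi$ is injective --- emanating from the common vertex $d^0_1 d^0_1 x = d^0_1 d^0_2 x$, hence carry distinct labels by determinism of $\D$ (Lemma \ref{D}), so the diagonal flip in the definition of $\partial^k_i$ ensures $\lambda_\B(\partial^0_2 x) < \lambda_\B(\partial^0_1 x)$ in both cases. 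For HM3, suppose $x, y \in (P_\B)_m$ have equal $\partial$-faces. If $m \geq 3$ then $x, y \in (P_\C)_m$, the $\partial$-faces agree with the $\C$-face maps, and $x = y$ by HM3 for $\C$. If $m = 2$, one uses that a new square $\chi([j-1,j],[0,1])$ carries the new edge $\chi([j-1,j],0)$ among its faces, which distinguishes it (by injectivity of $\chi$) from every $2$-cube of $\C$ and from every other new square, and one falls back on HM3 for $\C$ when $x, y \in (P_\C)_2$.

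The bulk of the argument is HM4. Let $\Q$ be an HDA satisfying HM1--HM3 with $P_\B \subseteq P_\Q$; we show $(P_\Q)_m = (P_\B)_m$ for all $m$, the case $m \leq 1$ being HM1. Since $\B_{\le 1}$ is bideterministic (Lemma \ref{D}), the $1$-skeleton of $\Q$ is deterministic, so two $2$-cubes of $\Q$ with the same $\partial^0_1$- and $\partial^0_2$-faces have all four faces equal --- the labels and sources of the remaining two faces are forced, using that parallel edges of an HDA carry the same label --- and hence are equal by HM3. Thus for $m = 2$ it suffices to run through the possibilities for the initial faces $\partial^0_1 z, \partial^0_2 z$ of a $2$-cube $z$ of $\Q$. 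If both are edges of $\C$, one shows all four faces of $z$ must be edges of $\C$ (the configurations in which a forced face would be the unique new edge $\chi([i,i+1],0)$ leaving an old vertex are excluded: they either force two faces of $z$ to coincide, violating HM2, or demand an edge between vertices not joined in $\D_{\le 1}$, using that $v$ is unreachable in $\A$), whence $z \in (P_\C)_2$ by the maximality of $\C$ from Lemma \ref{CHM}. If an initial face of $z$ is a new edge, its corner is either a new vertex $\chi(j,0)$ --- which has exactly the two outgoing edges $\chi([j,j+1],0)$ and $\chi(j,[0,1])$, forcing $z$ to be the new square $\chi([j,j+1],[0,1]) \in (P_\B)_2$ --- or the old vertex $\xi(i-1,1)$ with the new edge being $\chi([i,i+1],0)$, and then an analysis of the forced faces (via determinism and codeterminism of $\D_{\le 1}$ and the unreachability of $v$) shows the other initial face is $\chi(i,[0,1]) = \xi([i-1,i],1)$, so again $z = \chi([i,i+1],[0,1])$. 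For $m \geq 3$ I induct on $m$: each face of $z \in (P_\Q)_m$ lies in $(P_\B)_{m-1}$ by induction, and no such face can be a new square of $\D$, because a new bottom edge $\chi([j-1,j],0)$ is a face of exactly one $2$-cube of $\D$ whereas each edge of a cube of degree $\geq 3$ is a face of two distinct square-faces by Lemma \ref{facelem}; hence all faces of $z$ lie in $(P_\C)_{m-1}$, and $z \in (P_\C)_m$ by the maximality of $\C$.

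I expect the hard part to be the $m = 2$ step of HM4, and within it the bookkeeping around the single new edge $\chi([i,i+1],0)$ that leaves an old vertex: this is the one place where the ``new ladder'' added to pass from $\C$ to $\D$ interacts with the old HDA $\C$, and one must carefully trace how the four faces of a hypothetical $2$-cube of $\Q$ are forced in order to conclude that it already belongs to $\B$. Everything else reduces either to the trivial fact that each new vertex of $\D_{\le 1}$ has exactly two outgoing edges, or to the maximality of $\C$ established in Lemma \ref{CHM}.
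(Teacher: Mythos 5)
Your proposal is correct and follows essentially the same route as the paper's proof: HM1--HM3 by construction and the fact that each new bottom edge \(\chi([j-1,j],0)\) lies in exactly one square, and HM4 via the maximality of \(\C\) (Lemma \ref{CHM}), a case analysis of 2-cubes of \(\Q\) pinned down by determinism/codeterminism of the 1-skeleton, the label order, and the unreachability of \(v\), together with Lemma \ref{facelem} to exclude new squares as faces of cubes of degree \(\geq 3\). The only differences are cosmetic (you organize the 2-cube analysis by whether the initial faces lie in \(\C\) rather than by the position of the bottom vertex, and you leave the \(\chi(i,0)\)-corner subcase at roughly the same level of sketch as the paper does), so no genuine gap.
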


\begin{proof}
	By construction, \(\B\) satisfies HM1 and HM2. HM3 can be shown in a similar way as for \(\C\), see Lemma \ref{CHM}. Let \(\Q\) be an HDA that contains \(\B\) and satisfies HM1--HM3 with respect to \(\B_{\leq 1}\) and \(<\). Since, by Lemma \ref{CHM}, \(\C\) is the HDA model of \(\C_{\leq 1}\), all \(m\)-cubes of \(\Q\) \((m\geq 2)\)  with faces in \(\C\) belong to \(\C\). Let \(x \in (P_\Q)_2\) such that at least one edge of \(x\) does not belong to \(\C\). Since every edge with endpoints in \(\C\) belongs to \(\C\), \(x\) has a vertex that does not belong to \(\C\). Since \(\lambda_\Q(\partial^1_2x)< \lambda_\Q(\partial^1_1x)\), we have \(\partial^1_1x \not= \partial^1_2x\). Since in no vertex that does not belong to \(\C\) ends more than one edge, it follows that \(\partial^1_1\partial^1_2x \in (P_\C)_0\). This implies that if \(\partial^0_1\partial^0_2x \in (P_\C)_0\), then \(\partial^0_1\partial^0_2x = \chi(i,0)\). Indeed, in this case, \(\partial^1_1\partial^0_1x \notin (P_\C)_0\) or \(\partial^1_1\partial^0_2x \notin (P_\C)_0\), and so \(\partial^0_1\partial^0_2x\) is a vertex of \(\C\) in which starts an edge that ends in a vertex of \((P_\B)_0\setminus (P_\C)_0\). The only such vertex is \(\chi(i,0)\). Thus, there exists \(j \in \{i, \dots, k\}\) such that \(\partial^0_1\partial^0_2x = \chi(j,0)\).
	
	Suppose that \(\partial^0_1\partial^0_2x = \chi(j,0)\) with \(i < j \leq k\). Since \(\lambda_\Q(\partial^0_2x) <  \lambda_\Q(\partial^0_1x)\), we have \(\partial^0_1x \not= \partial^0_2x\). Therefore there exists \(r \in \{1,2\}\) such that \(\partial^0_rx = \chi(j,[0,1])\) and \( \partial^0_{3-r}x = \chi([j,j+1],0)\). Since \(\lambda_\Q(\partial^1_rx) = \lambda_\Q(\partial^0_rx) = a\), we have \(\partial^1_rx = \chi(j+1,[0,1])\). Since \(\Q_{\leq 1} = \B_{\leq 1}\) is deterministic, we have \(\partial^1_{3-r}x = \chi([j,j+1], 1)\) because \(\partial^1_{3-r}x\) starts in \(\chi(j,1) = \partial^0_1\chi([j,j+1], 1)\) and \(\lambda_\Q(\partial^1_{3-r}x) = \lambda_\Q(\partial^0_{3-r}x) = \lambda_\Q(\chi([j,j+1],0)) = \lambda_\Q(\chi([j,j+1],1))\).
	Since \(\Q\) satisfies HM3 and \(<\) is asymmetric, it follows that \(x = \chi([j,j+1],[0,1]) \in (P_\B)_2\). 
	
	Suppose now that \(\partial^0_1\partial^0_2x = \chi(i,0)\). Since \(\lambda_\Q(\partial^0_2x)< \lambda_\Q(\partial^0_1x)\), we have \(\partial^0_1x \not= \partial^0_2x\). The edges starting at \(\chi(i,0)\) are \(\chi(i,[0,1])\) and \({\chi([i,i+1],0)}\), and  \(\xi(i-1, [1,2])\) when \(i > 0\). Since all edges starting at the endpoints of \(\chi(i,[0,1])\) and \(\xi(i-1, [1,2])\) are edges of \(\C\), there exists \(r \in \{1,2\}\) such that \(\partial^0_rx\) is \(\chi(i,[0,1])\) or \(\xi(i-1, [1,2])\}\), and \( \partial^0_{3-r}x = \chi([i,i+1],0)\). Since all edges starting at the endpoints of \(\chi(i,[0,1])\) and \(\xi(i-1, [1,2])\) end in reachable vertices of \(\C\) and there exists no edge starting in a reachable vertex of \(\C\) and ending in \(\chi(i+2, 0)\), we have \(\partial^1_1\partial^1_2x = \chi(i+1,1)\), \(\partial^1_rx= \chi(i+1,[0,1])\), \(\partial^0_rx= \chi(i,[0,1])\), and \(\partial^1_{3-r}x = \chi([i,i+1],1)\). Since \(\Q\) satisfies HM3 and \(< \) is asymmetric, it follows that \(x = \chi([i,i+1],[0,1]) \in (P_\B)_2\). 
	
	Suppose that there exists an element \(y \in (P_\Q)_3\) such that at least one face of \(y\), say \(\partial^l_jy\), does not belong to \(\C\). Then \(\partial^l_jy = \chi([r,r+1],[0,1])\) for some \(i \leq r \leq k\). Since, for some \(s \in \{1,2\}\), 
	\begin{align*}
		\chi([r,r+1],0) &= \partial^0_s\chi([r,r+1],[0,1]) = \partial^0_s\partial^l_jy = \begin{cases}
			\partial^l_{j-1}\partial^0_sy, & s <j,\\
			\partial^l_j\partial^0_{s+1}y, & s \geq j,
		\end{cases}		
	\end{align*}
	Lemma \ref{facelem} implies that \((\chi([r,r+1],0)\) is an edge of two distinct faces of \(y\). Since \(\chi([r,r+1],[0,1])\) is the only 2-cube of \(\B\) having \((\chi([r,r+1],0)\) as an edge, this is impossible. 
	
	A simple induction now shows that \((P_\Q)_m = (P_\C)_m\) for all \(m \geq 3\). It follows that \(\B\) satisfies HM4. 
\end{proof}

\section{Shared-variable systems} \label{SecPG}

In this section, we consider shared-variable systems given by program graphs and establish our main result:

\begin{theor} \label{mainresult}
	There exists a shared-variable system such that the geometric realization of its HDA model has the homotopy type of the polyhedron \(|K|\).
\end{theor}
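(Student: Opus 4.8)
The plan is to feed everything proved so far into Theorem~\ref{thmacc} and then perform one last construction, turning the resulting HDA into the HDA model of a shared-variable system. First, by Proposition~\ref{Pbidet} and Theorem~\ref{PHDAmod} the HDA $\P$ is bideterministic and is the HDA model of its $1$-skeleton with respect to the natural strict total order on $\{1,\dots,N\}$; it is connected because $|P_\P| = |P| \approx |K|$ by Theorem~\ref{homeo} and $|K|$ is connected by hypothesis; and it is finite because $K$ has finitely many simplexes, so in particular it has only finitely many unreachable states. Hence Theorem~\ref{thmacc} applies to $\A = \P$ and produces an accessible bideterministic HDA $\B$, still finite, which is the HDA model of its $1$-skeleton $\T := \B_{\leq 1}$ with respect to a strict total order $<$ on $\Sigma_\B$ and satisfies $|P_\B| \simeq |P_\P| = |P| \approx |K|$. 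It therefore suffices to exhibit a shared-variable system whose HDA model is isomorphic to $\B$: for then the geometric realization of that HDA model is homotopy equivalent to $|P_\B| \simeq |K|$.

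To set up the realization, write $a \bowtie b$ when $a \neq b$ occur as the labels $\lambda_\B(d^0_2 x)$ and $\lambda_\B(d^0_1 x)$ of some $2$-cube $x$ of $\B$. Since $\B$ is bideterministic, a cube of $\B$ is determined by its boundary, and a direct verification of HM1--HM4 by means of the uniqueness of HDA models \cite[Thm.~4.2, Cor.~4.5]{transhda} shows that the HDA model of $\T$ with respect to \emph{any} relation $R$ on $\Sigma_\B$ which agrees with $<$ on $\bowtie$-related pairs and is asymmetric on them is again $\B$. Indeed, HM1 and HM3 are intrinsic to $\B$; HM2 holds because every $2$-cube $x$ of $\B$ has $\lambda_\B(d^0_2 x) < \lambda_\B(d^0_1 x)$ and $\lambda_\B(d^0_2 x) \bowtie \lambda_\B(d^0_1 x)$, so $\lambda_\B(d^0_2 x)\,R\,\lambda_\B(d^0_1 x)$; and HM4 holds because, by bideterminism and the coskeletal behaviour of HDA models, any cube that could be glued in without violating HM1--HM3 would force a $2$-cube sitting on a closed $4$-cycle of $\T$ on which $\B$ already carries the oppositely oriented square, contradicting the asymmetry of $R$ on $\bowtie$-pairs.

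I would then build the shared-variable system with one process for each label of $\Sigma_\B$, numbering the processes so that their order is compatible with $<$ along $\bowtie$-edges --- for instance, sending $a$ to the maximal length of a chain $a_1 < \dots < a_m = a$ with $a_t \bowtie a_{t+1}$ for all $t$, which places $\bowtie$-related labels into distinct, correctly ordered processes. The real content of the construction is the choice of shared variables, distributed among the processes, so that (i) the reachable global states are in bijection with the vertices of $\B$ and the associated transition system is isomorphic to $\T$, and (ii) two actions whose labels are $\bowtie$-related access disjoint sets of variables and are hence independent in the system. Granting (i) and (ii), the independence relation that the framework attaches to the system --- which relates two actions roughly when they belong to different processes and do not interfere through the shared variables --- agrees with $<$ on all $\bowtie$-pairs and, because the process order does, is asymmetric there; by the previous paragraph the HDA model of the system is then isomorphic to $\B$, which finishes the proof.

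The hard part is (i)--(ii): one must write down concrete program graphs together with shared variables that \emph{simultaneously} reconstruct the whole transition system $\T$ on the reachable global states --- which normally forces the processes to communicate through common variables --- and keep the variable footprints of $\bowtie$-adjacent actions disjoint, so that the coskeletal filling procedure inserts exactly the cubes of $\B$ and no spurious higher cubes. Doing this demands a careful, essentially bookkeeping-heavy specification of the locations, guards and effects of each process, relying on the accessibility and bideterminism of $\B$ to guarantee that the intended global states are precisely the reachable ones.
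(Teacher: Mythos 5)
Your first step is exactly the paper's: apply Proposition \ref{Pbidet}, Theorems \ref{homeo}, \ref{PHDAmod} and \ref{thmacc} to obtain an accessible, bideterministic, finite HDA \(\B\) that is the HDA model of its 1-skeleton with respect to a strict total order \(<\) and satisfies \(|P_\B|\simeq|K|\), and to reduce the theorem to realizing \(\B\) as the HDA model of a shared-variable system. But the realization itself is precisely what you do not carry out: your proposal ends by declaring the construction of the program graphs ``the hard part'' and describing only the properties it should have. Since this construction is the entire content of the remaining step, that is a genuine gap, not a deferrable bookkeeping matter.

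Moreover, the constraints you impose on the construction stem from a misreading of how the HDA model of a shared-variable system is defined in this paper. The cube-filling relation attached to a system \((\G_1,\dots,\G_n,\eta)\) is \((i,a)\,R\,(j,b)\iff i<j\): it depends only on the process indices, not on whether the actions interfere through shared variables. So your requirement (ii), that \(\bowtie\)-related actions have disjoint variable footprints, is unnecessary and in fact makes the task look far harder than it is; the paper's system uses a \emph{single} shared variable \(x\) with domain \(D_x=(P_\B)_0\), read and written by every process. One takes one process per label \(a_i\) of \(\Sigma_\B=\{a_1<\dots<a_n\}\), each with a single location and a single self-loop whose action \(\bar a_i\) sends a vertex \(v\) to \(d^1_1y\) whenever an edge \(y\) with \(d^0_1y=v\) and \(\lambda_\B(y)=a_i\) exists (well defined by determinism) and fixes \(v\) otherwise, with guard the set of sources of \(a_i\)-labeled edges and initial evaluation \(I_\B\). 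Accessibility of \(\B\) then gives an isomorphism of the system's transition system with \(\B_{\leq 1}\) under which \(R\) corresponds exactly to \(<\), and uniqueness of HDA models \cite[Thm.~4.2, Cor.~4.5]{transhda} yields \(\A\cong\B\) directly --- no extension of \(<\) to other relations is needed. Your intermediate claim that the HDA model of \(\T\) with respect to \emph{any} asymmetric relation agreeing with \(<\) on \(\bowtie\)-pairs is again \(\B\) is also unjustified: HM4 concerns empty squares of \(\T\) not filled in \(\B\), whose label pairs need not be \(\bowtie\)-related, so such a relation could force additional 2-cubes; your appeal to an ``oppositely oriented square'' already present in \(\B\) does not cover this case. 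Fortunately that claim is not needed once the relation is literally \(<\) transported along the bijection \(a_i\mapsto(i,\bar a_i)\).
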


\subsection*{Program graphs and shared-variable systems} 

Let $V$ be a set of \emph{variables}. The \emph{domain} of a variable $x$, i.e., the set of its possible values, will be denoted by $D_x$. A \emph{program graph} over $V$ is a tuple \[(L, A, T, g, \imath
)\]
where $L$ is a set of \emph{locations} or \emph{local states}, $A$ is a finite set of \emph{actions}, i.e., functions ${\prod \limits_{x \in V} D_x \to \prod \limits_{x \in V} D_x}$, $T\subseteq L \times A  \times L$ is a set of \emph{transitions}, \(g\) is a function that specifies a \emph{guard condition}, i.e., a subset of  \(\prod \limits_{x \in V} D_x\), for each transition, and $\imath \in L$ is an \emph{initial location} (cf. \cite{BaierKatoen}). A \emph{shared-variable system} over $V$ is a tuple $(\G_1, \dots, \G_n, \eta)$ consisting of program graphs \(\G_i\) and an \emph{initial evaluation} \(\eta \in \prod \limits_{x \in V} D_x\).

\subsection*{The HDA model of a shared-variable system}

Consider a shared-variable system $(\G_1, \dots, \G_n, \eta)$ over a set of variables $V$, and write  
\(\G_i = (L_i, A_i, T_i, g_i, \imath_i)\). The \emph{state graph} of $(\G_1, \dots, \G_n, \eta)$ is the 1-truncated   precubical set \(Q\) where 
\[Q_0 = L_1 \times \cdots \times L_n \times \prod \limits_{x \in V} D_x,\] 
\[Q_1 = \bigcup_{\substack{i\in \{1,\dots,n\} \\t \in T_i}} L_1 \times \cdots \times  L_{i-1} \times \{t\} \times L_{i+1} \times \cdots \times L_n \times g_i({t}),\]
and for \(y = (l_1,\dots, l_{i-1},t,l_{i+1}, \dots, l_{n},\gamma)\in Q_1\) with \(t = (l^0_t, a_t, l^1_t)\), \[d^0_1y = (l_1,\dots, l_{i-1},l^0_{t},l_{i+1}, \dots, l_{n},\gamma)\] and  \[d^1_1y = (l_1,\dots, l_{i-1},l^1_{t},l_{i+1}, \dots, l_{n},a_{t}(\gamma)).\]
The \emph{initial state} of the system is the state
$I  = (\imath_1, \dots , \imath_n, \eta)$. 
The \emph{transition system model} of  $(\G_1, \dots, \G_n, \eta)$ is the transition system \(\T\) where $P_\T$ is the largest precubical subset of $Q$ such that all states are reachable from the initial state \(I\), $I_\T = I$, $\Sigma_\T = \bigcup \limits_{i= 1}^n \{i\}\times A_i$, and the label of an edge $y = (l_1,\dots, l_{i-1},t,l_{i+1}, \dots, l_{n},\gamma) \in (P_\T)_1$ with \(t = (l^0_t, a_t, l^1_t)\) is given by ${\lambda_\T(y) = (i,a_t)}$. The \emph{HDA model} of \((\G_1, \dots, \G_n, \eta)\) is the HDA model of \(\T\) with respect to the relation \(R\) on \(\Sigma_\T\) given by \[(i,a) \,R\, (j,b) \iff i < j.\]

\begin{rem}
	In practice, the transition system model of a shared-variable system can be constructed without handling unreachable states using a procedure such as the one described in the reference manual of the Spin model checker \cite[Sect. 7]{Spin}. HDA models of shared-variable systems written in Promela, the process description language of Spin, can be computed using the tool pg2hda \cite{pg2hda}.
\end{rem}

\subsection*{Proof of Theorem \ref{mainresult}}

Since \(K\) is connected and \(|K| \approx |P| = |P_\P|\) by Theorem \ref{homeo}, the HDA \(\P\) is also connected. By Proposition \ref{Pbidet} and Theorems \ref{PHDAmod} and  \ref{thmacc}, there exists an accessible and deterministic HDA \(\B\) which is the HDA model of its 1-skeleton with respect to a total order on \(\Sigma_\B\) and satisfies \(|P_\B| \simeq |P_\P| \approx |K|\). 

Suppose that \(\Sigma_\B = \{a_1 < \dots < a_n\}\). Consider a single variable \(x\) with domain \(D_x = (P_\B)_0\), and let \((\G_1, \dots, \G_n, \eta)\) be the shared-variable system over \(V = \{x\}\) where \(\eta = I_\B\) and the program graphs \({\G_i = (L_i,A_i, T_i, g_i, \imath_i)}\) are defined by
\begin{itemize}
    \item \(L_i = \{0\}\);
    \item \(A_i = \{\bar a_i\}\) where \(\bar a_i(v) = \left\{\begin{array}{ll}
        d^1_1y, & \exists\, y \in (P_\B)_1 : d^0_1y = v,\,  \lambda_\B(y) = a_i,\\
        v, & \mbox{else;}
    \end{array}\right.\)
    \item \(T_i = \{(0,\bar a_i,0)\}\);
    \item \(g_i(0,\bar a_i,0) = \{d^0_1y\,|\, y \in (P_\B)_1,\, \lambda_\B(y) = a_i\}\);
    \item \(\imath_i = 0\).		
\end{itemize}
Since \(\B\) is deterministic, the action \(\bar a_i\) is well defined. Let \(Q\) be the state graph of \((\G_1, \dots, \G_n, \eta)\). We have \[(P_\B)_0 \cong L_1 \times \dots \times L_n \times D_x = Q_0.\]
Since \(\B\) is deterministic, the map 
\begin{align*}
    d^0_1\colon \{y \in (P_\B)_1\,|\, \lambda_\B(y) = a_i\} &\to \{d^0_1y\,|\, y \in (P_\B)_1,\, \lambda_\B(y) = a_i\} = g_i(0,\bar a_i,0)
\end{align*}
is a bijection for each \(i\). Hence	
\begin{align*}
    (P_\B)_1 &= \bigcup_{i\in \{1,\dots,n\}} \{y \in (P_\B)_1\,|\, \lambda_\B(y) = a_i\}\\ 
    &\cong  \bigcup_{i\in \{1,\dots,n\}} L_1 \times \cdots \times  L_{i-1} \times \{(0,\bar a_i,0)\} \times L_{i+1} \times \cdots \times L_n \times g_i(0,\bar a_i, 0)\\		
    &= Q_1.
\end{align*}
Since for \(y\in (P_\B)_1\) with \(\lambda_\B(y) = a_i\) we have 
\[d^0_1(0, \dots, 0, (0,\bar a_i,0), 0, \dots, 0, d^0_1y) = (0, \dots, 0, d^0_1y) \]
and 
\[d^1_1(0, \dots, 0, (0,\bar a_i,0), 0, \dots, 0, d^0_1y) = (0, \dots, 0, \bar a_i(d^0_1y)) = (0, \dots, 0, d^1_1y),\]
the precubical sets \((P_\B)_{\leq 1}\) and \(Q\) are isomorphic.

Let \(\T\) be the transition system model of \((\G_1, \dots, \G_n, \eta)\). Since the initial state of the system, \(I = (0, \dots, 0, I_\B)\), corresponds to \(I_\B\) under the isomorphism \((P_\B)_{\leq 1} \cong Q\) and \(\B\) is accessible, all states of \(Q\) are reachable from \(I\). Hence \(P_\T = Q\). We have 
\[\Sigma_\B = \{a_1, \dots, a_n\} \cong \{(1, \bar a_1), \dots , (n,\bar a_n)\} = \Sigma_\T.\] 
Since for an edge \(y \in (P_\B)_1\) with \(\lambda_\B(y) = a_i\) we have 
\[\lambda_\T(0,\dots, 0, (0, \bar a_i, 0),0, \dots, 0, d^0_1y) = (i, \bar a_i),\] 
it follows that the transition systems \(\B_{\leq 1}\) and \(\T\) are isomorphic.

Let \(\A\) be the HDA model of \((\G_1, \dots, \G_n, \eta)\). Then \(\A\) is the HDA model of \(\T\) with respect to the relation 
\(R\) on \(\Sigma_\T\) given by \[(i,\bar a_i) \,R\, (j,\bar a_j) \iff i < j \iff a_i < a_j.\]
By \cite[Thm. 4.2, Cor. 4.5]{transhda}, it follows that the HDAs \(\A\) and \(\B\) are isomorphic. In particular, \(|P_\A| \approx |P_\B| \simeq |K|\). \qed{\parfillskip0pt\par}

\bibliography{refs}
\bibliographystyle{apalike}

\end{document}